\documentclass[runningheads]{llncs}

\usepackage[T1]{fontenc}
\usepackage[english]{babel}

\usepackage{orcidlink}

\usepackage{mathtools} 
\usepackage{amssymb}

\usepackage[dvipsnames]{xcolor}
\usepackage{graphicx}
\usepackage{tabu}
\usepackage{wrapfig}

\usepackage{tikz}
\usetikzlibrary{automata,positioning}
\usepackage{listings, lstautogobble}

\usepackage{algorithm} 
\usepackage[endLComment={}]{algpseudocodex} 

\usepackage{subcaption}

\usepackage[shortlabels,inline]{enumitem}
\usepackage{dsfont}
\usepackage{xifthen}
\usepackage{xparse}
\usepackage{hyperref}
\usepackage[Export]{adjustbox}
\usepackage{xspace}

\usepackage{tabu}
\usepackage{booktabs}

\usepackage{nicematrix}
\usepackage{xfrac}

\usepackage{environ}

\usepackage[acronym]{glossaries} 
\glsdisablehyper 
\newacronym{mjls}{MJLS}{Markov jump linear system}
\newacronym{pctl}{PCTL}{probabilistic computation tree logic}
\newacronym{pctlm}{PCTL\(^{-}\)}{probabilistic computation tree logic}
\newacronym{ms}{MS}{mean stable}
\newacronym{mss}{MSS}{mean-square stable}

\newacronym[longplural={general Markov processes}]{gmp}{GMP}{general Markov process}
\newacronym{lds}{LDS}{linear (discrete-time) dynamical system}
\newacronym{ltl}{LTL}{linear temporal logic}
\newacronym{mso}{MSO}{monadic second-order logic}
\newacronym{mrm}{MRM}{Markov reward model}
\newacronym{dtmc}{DTMC}{discrete-time Markov chain}

\tikzset{
  every initial by arrow/.style={thick, ->, >=stealth},
  every picture/.style={initial text=},
  >=stealth, 
  every node/.style={line width=0.75pt, align=center}, 
  every path/.style={line width=0.75pt}
}

\newcommand{\NN}{\mathbb{N}}
\newcommand{\RR}{\mathbb{R}}

\newcommand{\mi}[1]{\mathit{#1}}

\providecommand{\where{}} 
\newcommand\SetSymbol[1][]{%
  \nonscript\;#1\vert
  \allowbreak
  \nonscript\;\mathopen{}
}
\DeclarePairedDelimiterX\curly[1]{\lbrace}{\rbrace}{%
  \renewcommand{\where}{\SetSymbol[\delimsize]}#1
}
\newcommand{\set}{\curly*}
\DeclarePairedDelimiterX\abs[1]{\lvert}{\rvert}{%
  \ifblank{#1}{\:\cdot\:}{#1}
}
\DeclarePairedDelimiterX\norm[1]{\lVert}{\rVert}{%
  \ifblank{#1}{\:\cdot\:}{#1}
}
\DeclarePairedDelimiterX\xsemantics[1]{\llbracket}{\rrbracket}{%
  \ifblank{#1}{\:\cdot\:}{#1}
}

\newcommand{\ie}{i.e.\@\xspace}
\newcommand{\eg}{e.g.\@\xspace}

\newcommand{\cf}{cf.\@\xspace}

\newcommand{\repeattheorem}[1]{%
  \begingroup
  \renewcommand{\thetheorem}{\ref{#1}}%
  \expandafter\expandafter\expandafter\theorem
  \csname reptheorem@#1\endcsname
  \endtheorem
  \endgroup
}

\NewEnviron{reptheorem}[1]{%
  \global\expandafter\xdef\csname reptheorem@#1\endcsname{%
    \unexpanded\expandafter{\BODY}%
  }%
  \expandafter\theorem\BODY\unskip\label{#1}\endtheorem
}

\newcommand{\repeatlemma}[1]{%
  \begingroup
  \renewcommand{\thelemma}{\ref{#1}}%
  \expandafter\expandafter\expandafter\lemma
  \csname replemma@#1\endcsname
  \endlemma
  \endgroup
}

\NewEnviron{replemma}[1]{%
  \global\expandafter\xdef\csname replemma@#1\endcsname{%
    \unexpanded\expandafter{\BODY}%
  }%
  \expandafter\lemma\BODY\unskip\label{#1}\endlemma
}

\AddToHook{cmd/appendix/before}{} 

\newcommand{\bnfmid}{\quad|\quad}

\newcommand{\sat}[1]{\text{Sat}\left(\ifthenelse{\isempty{#1}}{\cdot}{#1}\right)}

\newcommand{\ttrue}{\mi{true}}

\NewDocumentCommand{\cylset}{O{\left}O{\right}m}{\mi{Cyl}#1(\ifthenelse{\isempty{#3}}{\cdot}{#3}#2)}
\NewDocumentCommand{\prob}{O{\left}O{\right}m}{\mi{Pr}#1(\ifthenelse{\isempty{#3}}{\cdot}{#3}#2)}
\newcommand{\probmatrix}{P}
\newcommand{\transprob}[2]{\probmatrix\left(\ifthenelse{\isempty{#1}}{\cdot}{#1}, \ifthenelse{\isempty{#2}}{\cdot}{#2}\right)}
\newcommand{\statematrix}[1]{A_{#1}}

\newcommand{\paths}[2][]{\ifthenelse{\isempty{#1}}{\mi{P\!aths}\left(#2\right)}{\mi{P\!aths}_{#1}\left(#2\right)}}

\newcommand{\expectation}[3]{\mathbb{E}_{#1}\left[#2\ifthenelse{\isempty{#3}}{}{\;\middle\vert\; #3}\right]}
\newcommand{\variance}[3]{\mathbb{V}_{#1}\left[#2\ifthenelse{\isempty{#3}}{}{\;\middle\vert\; #3}\right]}

\newcommand{\transpose}{^{\scriptscriptstyle \mathsf{T}}}
\newcommand{\kron}{\mathbin{\otimes}}
\newcommand{\blkdiag}[1]{\text{blkdiag}(#1)}
\newcommand{\indicator}[2]{\mathbf{1}_{#1}\left(#2\right)}

\newcommand{\spectral}[1]{\rho\left(#1\right)}

\newcommand{\summn}[3]{\textsc{sum}^{#1}_{#2}\ifthenelse{\isempty{#3}}{}{\left(#3\right)}}
\newcommand{\mvec}[1]{\textsc{vec}\ifthenelse{\isempty{#1}}{}{\left(#1\right)}}
\newcommand{\unmvec}[2]{\textsc{vec}^{-1}_{#1}\ifthenelse{\isempty{#2}}{}{\left(#2\right)}}
\newcommand{\mvech}[1]{\textsc{vech}\ifthenelse{\isempty{#1}}{}{\left(#1\right)}}
\newcommand{\unmvech}[1]{\textsc{vech}^{-1}\ifthenelse{\isempty{#1}}{}{\left(#1\right)}}

\NewDocumentEnvironment{bcmatrix}{mo}{\colorlet{savedleftcolor}{.}\color{#1}\left[\color{savedleftcolor}\begin{matrix}}{\end{matrix}\color{\IfNoValueTF{#2}{#1}{#2}}\right]}
\NewDocumentEnvironment{bscmatrix}{mo}{\colorlet{savedleftcolor}{.}\color{#1}\left[\color{savedleftcolor}\begin{smallmatrix}}{\end{smallmatrix}\color{\IfNoValueTF{#2}{#1}{#2}}\right]}

\newcommand{\identmatrix}[1]{I_{#1}}
\newcommand{\zeromatrix}[1]{0_{#1}}
\newcommand{\xtheta}{(x, \theta)}
\newcommand{\statespace}[1][\mathfrak{J}]{S_{#1}}
\newcommand{\boperator}{\mathcal{B}}
\newcommand{\toperator}{\mathcal{T}}
\newcommand{\basevect}[1]{\vec{e}_{#1}}
\newcommand{\ap}{\mi{AP}}

\newcommand{\vDashGMP}{\vDash_{\mathsf{GMP}}}
\newcommand{\Q}[2]{Q\left(#1, #2\right)}

\NewDocumentCommand{\PCTLprob}{O{\left}O{\right}mm}{\mathcal{P}_{#3}#1(#4#2)}

\newcommand{\PCTLuntil}[2]{#1 \mathbin{\mathcal{U}} #2}
\newcommand{\PCTLbuntil}[3]{#1 \mathbin{\mathcal{U}^{\leq #3}} #2}
\newcommand{\PCTLnext}[1]{\mathcal{X} \, #1}
\newcommand{\PCTLbexp}[2]{\mathcal{E}^{#1}_{#2}}
\newcommand{\PCTLexp}[1]{\mathcal{E}_{#1}}
\newcommand{\PCTLbvar}[2]{\mathcal{V}^{#1}_{#2}}
\newcommand{\PCTLvar}[1]{\mathcal{V}_{#1}}

\begin{document}

\renewcommand{\sectionautorefname}{Section} 
\renewcommand{\subsectionautorefname}{Section} 
\renewcommand{\subsubsectionautorefname}{Section} 
\renewcommand{\figureautorefname}{Fig.\@}

\title{Stability Checking of Markov Jump Linear Systems via Probabilistic Temporal Logic (Extended Version)}
\titlerunning{Stability Checking of \acrshort{mjls} via Probabilistic Temporal Logic}
%
\author{Lena Becker\orcidlink{0009-0007-8361-3505} \and
Holger Hermanns\orcidlink{0000-0002-2766-9615}}
\authorrunning{L.\@ Becker \and H.\@ Hermanns}
%
\institute{Saarland University, Saarland Informatics Campus, 66123 Saarbrücken, Germany}
\maketitle              
\begin{abstract}
  \glspl{mjls} model dynamical phenomena subject to random switching among multiple linear modes, driven by an underlying Markov chain.
  Classical notions such as mean and mean-square stability characterize the long-term asymptotic behaviour of the first and second moments of an \gls{mjls}, but they can be overly conservative or even misleading when only a specific subset of initial conditions is of interest.
  We tackle this challenge through the lens of model checking, where reasoning about specific sets of initial conditions is intrinsic to the approach. 
  We begin by formalizing \gls{pctl} on \glspl{mjls}, enabling the specification of state-based temporal properties for these systems.
  Building on this foundation, we extend the logic to capture moment-based stability properties relative to a prescribed set of initial states.
  While we ultimately do not obtain a decision procedure for the entire base logic, the logical extensions can be handled -- albeit with some technical subtleties -- by exploiting linear-algebraic techniques.
  \keywords{PCTL model checking \and Dynamical systems \and Stability}
\end{abstract}

\glsresetall 
\glsunset{pctlm}

\section{Introduction}\label{sec:introduction}

Dynamical systems are a fundamental tool for modelling the evolution of real-world processes over time, typically with the objective of designing controllers that ensure a desired system behaviour.
In many applications, however, the underlying dynamics are subject to uncertainty and may exhibit abrupt changes, \eg in case of system failures or volatile environmental circumstances, inducing switches between different modes of dynamics, each covering the behaviour in a specific situation.
We here focus on models where these switches are probabilistic in nature, and thus governed by a Markov chain.
In such \glspl{mjls}~\cite{costa2005discrete,costa2012continuous}, modes are mapped to distinct system dynamics, adding discrete behaviour to the continuous-state system.
These system dynamics are typically assumed to be linear, as many systems can be locally approximated by linear dynamics near operating points.
\glspl{mjls} have been used to model a wide range of application domains including flight control~\cite{gray2000flight}, power systems analysis~\cite{loparo1990power}, and robotic systems such as lower-limb exoskeletons~\cite{nogueira2014exoskeleton}.

In settings where the system cannot be controlled, or where the controller is already fixed, one must rely on analysis techniques to assess its behaviour.
A commonly analysed property of dynamical systems in this context is their stability, \ie whether the system's state converges in the long run, independently of the initial conditions.
The stability analysis of \glspl{mjls} not only needs to take into account the dynamical behaviour of all system modes, but also the long-run behaviour of the underlying Markov chain.
It may  yield counterintuitive results:
even if each individual dynamics considered in isolation is stable, the \gls{mjls} as a whole may exhibit unstable behaviour, and vice versa.

In the literature, different notions of stability for \glspl{mjls} have been explored, prominently among them \emph{mean stability}~\cite{lian2015mean} -- analysing the evolution of the first moment, and \emph{mean-square stability}~\cite{costa1993stability,costa2005discrete,feng1992stability} -- accounting for cancellation effects due to opposing outliers by considering the second moment as the stability criterion of choice.
Both stability notions are typically characterized via eigenvalue analysis of specific operators that capture the combined continuous and discrete probabilistic dynamics of the \gls{mjls}.
In doing so, it is standard to aim for global guarantees covering all initial states.
This perspective may however be overly conservative in practice.
In particular, instability might arise only for a small fraction of initial states that may not even be of practical relevance, \eg if corresponding to physically unreachable configurations, such as physically impossible operating regions of a drone.

Despite its intuitive importance, the \gls{mjls} stability problem for a given set of initial states has not been tackled in the literature, at least to the best of our knowledge.
It is a genuine model-checking problem, so we attack it by embedding appropriate logical operators into a typical temporal-logic framework.
Probabilistic model checking, based on logics such as \gls{pctl}, enables the specification of temporal properties with respect to specific sets of initial states~\cite{hansson1994logic}.
\gls{pctl} for \glspl{mjls} arises naturally by conservatively extending work on \gls{pctl} for \glspl{dtmc}~\cite{baier2008principles}, as well as by specializing work for \glspl{gmp}~\cite{ramponi2010connections}.
When turning to the model-checking problem for this logic, however, it turns out that the latter work must necessarily be incomplete, because of a Skolem-hardness result~\cite{chonev2016orbit} that applies to a specific fragment of \gls{pctl} model checking on \glspl{mjls} (and thus equally on \glspl{gmp} from~\cite{ramponi2010connections}).
We proceed by extending the logic with operators to capture moment-based stability properties and discuss the intricacies when checking properties that include such operators.
Altogether, we arrive at a logic that enables the specification of refined stability-like properties as well as their combination with classical reachability or branching-time objectives.
A computational example enriches the exposition.

The remainder of this paper is structured as follows:
\autoref{sec:background} develops relevant background on \glspl{mjls}.
In \autoref{sec:pctl}, we define a semantics for classical \gls{pctl} for \glspl{mjls} and give an algorithm to verify such \gls{pctl} properties.
Furthermore, we discuss limitations in model checking reachability properties due to the uncountable state space of \glspl{mjls}.
Finally, in \autoref{sec:expectation}, we extend \gls{pctl} with the aforementioned new operators that capture the first and second moments of \glspl{mjls} and describe an algorithm to verify these properties.

\subsubsection*{Related Work.}
\emph{Discrete-time stochastic systems} that are given by the equation \(x(k+1)=\mathcal{F}_{a}\left(x(k), w(k)\right)\), where \(a\) ranges over the available modes, generalize \glspl{mjls} in multiple aspects:
the functions~\(\mathcal{F}_{a}\) may be non-linear, the switching is not necessarily governed by a Markov chain, and a stochastic noise~\(w(k)\) is incorporated.
Contrasting our approach, existing work on the verification of such systems is based on abstractions into finite-state models~\cite{lahijanian2015verification}.
A \gls{pctl} semantics is then defined on these abstractions and model checking is carried out in a fashion similar to standard algorithms for Markov chains or Markov decision processes.

\emph{\Glspl{mrm}} extend \glspl{dtmc} by associating rewards with the states and accumulating them additively~\cite{andova2004rewards}.
The use of multiplicative rewards instead~\cite{baier2025multiplicative} yields a modelling formalism that closely resembles \glspl{mjls} with a single continuous dimension.
This prevents modelling dependencies across multiple continuous dimensions.
Further, since only positive rewards are supported, behaviours that rely on sign changes or cancellations are excluded.
The authors consider a notion of expectation that is similar in spirit to our treatment of continuous state variables, but oscillatory behaviour is handled by distinguishing between limit superior and limit inferior.
In contrast, we consider a notion based on limits of Cesàro-averaged expected values.

\emph{\Glspl{lds}} given by \({x(k+1)=Ax(k)}\) have been subject to a long line of research investigating the decidability of reachability problems~\cite{karimov2022decidable}.\footnote{The term ``reachability'' here refers to the set of states reachable without considering the option to choose controller inputs in a stepwise way, \ie the orbit problem.}
While deciding reachability of a specific point is possible in polynomial time~\cite{kannan1986orbit}, reachability of more complex sets, such as a four-dimensional affine subspace in a five-dimensional system, can be encoded as longstanding open problems like the Skolem problem~\cite{chonev2016orbit}.
Furthermore, model checking \gls{ltl} is known to be decidable for \glspl{lds} up to dimension three~\cite{karimov2020ltl}.
Since \glspl{lds} can be viewed as a deterministic special case of \glspl{mjls}, these results provide important insight into the inherent difficulty of verification problems in our setting.
In particular, they indicate that even in the absence of stochastic switching, reachability and model-checking problems quickly approach the frontier of decidability, thereby setting clear limits on what can be expected for the more general case of \glspl{mjls}.

\emph{\Acrfullpl{gmp}} extend finite-state \glspl{dtmc} to an uncountable state space by replacing the transition probability matrix with a stochastic kernel defined over a measurable state space.
Existing work explores a \gls{pctl} semantics for \glspl{gmp} and shows that reach-avoid properties can be characterized via Bellman fixed-point equations~\cite{ramponi2010connections}.
As we show in \autoref{eq:gmp-lifting}, \glspl{mjls} can be expressed in the framework of \glspl{gmp} and, thus, these results are directly relevant to our setting.
However, just like \glspl{mjls}, \glspl{gmp} should also be affected by the aforementioned difficulties in deciding reachability problems.

\section{Background}\label{sec:background}

\subsubsection*{Notation.}\label{sec:notation}
The indicator function that evaluates to \(1\) if \(x\in S\) and \(0\) otherwise is denoted \(\indicator{S}{x}\).
For \(i\in\set{1,\dots,m}\), let \(\basevect{i}\) be the \(i\)-th canonical base vector for the \(m\)-dimensional space.
Let \(\identmatrix{n}\) and \(\zeromatrix{n}\) be the identity and all-zeros matrix of size \(n\times n\), respectively.
\(A\transpose\) denotes the transpose of matrix \(A\).
Further, \(\spectral{A} = \max \abs{\lambda_{i}}\), where \(\lambda_{i}\) ranges over the eigenvalues of \(A\), denotes the spectral radius of matrix~\(A\), which is the maximum of the absolute values of \(A\)'s eigenvalues.

The block diagonal matrix \(\blkdiag{\statematrix{1},\dots, \statematrix{m}}\) is defined as the matrix that puts \(\statematrix{1}\) to \(\statematrix{m}\) on the main diagonal and fills all other entries with~\(0\).
We further define the Kronecker product \(A \kron B\) of matrices \(A\) and \(B\) as usual.

We will make use of a function \(\mvec{}\) that vectorizes an \(m\times n\) matrix to a vector of length~\(mn\) by stacking the matrix columns underneath each other, \ie \(\mvec{\strut(a_{ij})} = \begin{bsmallmatrix}a_{11} & \cdots & a_{m1} & \cdots & a_{1n} & \cdots & a_{mn}\end{bsmallmatrix}\transpose\).
Its inverse function \(\unmvec{m\times n}{}\) transforms a vector of length \(mn\) into an \(m\times n\) matrix.
The function \(\mvech{}\) carries this concept to symmetric matrices and only vectorizes the lower triangular elements.
In particular, \(\mvech{}\) ``half-vectorizes'' a symmetric \(n \times n\) matrix to a vector of length \(\frac{1}{2}n(n+1)\) and is given by \(\mvech{\strut(a_{ij})} = \begin{bsmallmatrix}a_{11} & \cdots & a_{n1} & a_{22} & \cdots & a_{n2} & \cdots & a_{nn}\end{bsmallmatrix}\transpose\).

Finally, we define a function \(\summn{mn}{n}{}\) that reshapes a vector of length~\(mn\) into an \(n\times m\) matrix (via \(\unmvec{n\times m}{}\)) and then sums its columns, resulting in a vector of \(n\)~entries.
Formally, it is defined as 
\(
  \summn{mn}{n}{\begin{bsmallmatrix}
      a_{1} & \cdots & a_{\mi{mn}}
    \end{bsmallmatrix}\transpose}
  = (1_{m}\transpose \kron \identmatrix{n}) \cdot \begin{bsmallmatrix}
      a_{1} & \cdots & a_{\mi{mn}}
    \end{bsmallmatrix}\transpose
\), where \(1_{m}\) is the all-ones vector of length~\(m\).

\begin{figure}[t]
  \centering%
  \begin{minipage}[c]{0.24\linewidth}
    \(\begin{aligned}[t]
      \statematrix{1} & = \begin{bmatrix}
                            0.35 & \phantom{-}\mathllap{1}3.45 \\ 0.35 & -7.65
                          \end{bmatrix}  \\
      \statematrix{2} & = \begin{bmatrix}
                            -17.66 & 0.35 \\ \phantom{-}13.44 & 0.35
                          \end{bmatrix} \\
      \statematrix{3} & = \begin{bmatrix}
                            0.35 & \phantom{-}2.65 \\ 0.35 & -0.35
                          \end{bmatrix}
    \end{aligned}\)%
  \end{minipage}\hspace{0.4cm}%
  \begin{minipage}[c]{0.20\linewidth}
    \(\probmatrix = \begin{bmatrix}
      \sfrac{1}{7} & \sfrac{3}{7} & \sfrac{3}{7} \\
      \sfrac{1}{4} & \sfrac{1}{2} & \sfrac{1}{4} \\
      \sfrac{1}{3} & \sfrac{1}{3} & \sfrac{1}{3}
    \end{bmatrix}%
    \)%
  \end{minipage}\hspace{0.1cm}%
    \begin{minipage}[c]{0.5\linewidth}
      \centering%
      \begin{tikzpicture}[baseline]
        \node[state, inner sep=3pt, minimum size=0pt] (m1) {1};
        \node[state, inner sep=3pt, minimum size=0pt] (m2) [below left=of m1] {2};
        \node[state, inner sep=3pt, minimum size=0pt] (m3) [below right=of m1] {3};
        
        \path[->]
        (m1)  edge [loop right] node [xshift=-1pt] {\(\sfrac{1}{7}\)} ()
        edge [bend left=10] node [right,xshift=1pt,yshift=-1pt] {\(\sfrac{3}{7}\)} (m2)
        edge [bend left=10] node [right] {\(\sfrac{3}{7}\)} (m3)
        (m2)  edge [bend left=10] node [left,xshift=-1pt] {\(\sfrac{1}{4}\)} (m1)
        edge [loop left] node [xshift=1pt] {\(\sfrac{1}{2}\)} ()
        edge [bend left=7] node [above,yshift=-2pt] {\(\sfrac{1}{4}\)} (m3)
        (m3)  edge [bend left=10] node [left,xshift=0.5pt,yshift=-1pt] {\(\sfrac{1}{3}\)} (m1)
        edge [bend left=7] node [below,yshift=2pt] {\(\sfrac{1}{3}\)} (m2)
        edge [loop right] node [xshift=-1pt] {\(\sfrac{1}{3}\)} ()
        ;
      \end{tikzpicture}%
  \end{minipage}%
  \caption{An \gls{mjls} \(\mathfrak{J} = (\probmatrix, \set{\statematrix{1}, \statematrix{2}, \statematrix{3}})\) with the \gls{dtmc} visualized on the right.}
  \label{fig:running-example}
\end{figure}

\subsubsection*{\Glspl{lds}.} A \acrfull{lds} is described by a state matrix~\(A \in \RR^{n\times n}\), and, given an initial state~\(x(0) \in \RR^{n}\), evolves according to the equation \(x(k+1) = Ax(k)\).
Properties of \glspl{lds}, in particular stability~\cite{brogan1985modern,chen1984linear} and reachability (here: orbit problem)~\cite{kannan1986orbit,chonev2016orbit}, have been studied extensively.

\subsubsection*{\Glspl{mjls}.} \acrfullpl{mjls} generalize \glspl{lds} by allowing for switching between a finite set of state matrices.
The switching is governed by an underlying \acrfull{dtmc}, whose modes are associated with the corresponding state matrices.
Formally, an \gls{mjls}~\(\mathfrak{J}\) with \(m\)~modes and \(n\)~continuous state variables is a tuple \((\probmatrix, \set{\statematrix{1},\ldots,\statematrix{m}})\), where \(\probmatrix\in\RR^{m\times m}\) is the transition probability matrix of Markov chain \(\set{\theta(k) \where k \in \NN}\) that assumes values from \(1\) to \(m\) and \(\statematrix{1}\) to \(\statematrix{m} \in \RR^{n \times n}\) are the state matrices.
\(\transprob{\theta}{\theta'}\) denotes the transition probability from mode~\(\theta\) to \(\theta'\), \ie \(\prob{\theta(k + 1) = \theta' \,\vert\, \theta(k) = \theta}\) for \(k \in \NN\).
Let \(x(k) \in \RR^{n}\) denote the continuous state at time \(k\in\NN\).
Then, this continuous state evolves according to the equation \(x(k + 1) = \statematrix{\theta(k)}x(k)\)~\cite{costa2005discrete}.

\begin{example}
\autoref{fig:running-example} shows an \gls{mjls} with three modes, corresponding state matrices \(\statematrix{1}\), \(\statematrix{2}\), and \(\statematrix{3}\), and two continuous state variables (values are rounded, see Appendix~\ref{sec:example-mjls} for the precise values).
\end{example}

\subsubsection*{\gls{mjls} Measurability.}The state space~\(\statespace\) of \gls{mjls}~\(\mathfrak{J}\) is of the form \(\RR^{n} \times \set{1, \ldots, m}\).
An infinite path of \(\mathfrak{J}\) is a sequence \((x_{0}, \theta_{0}) (x_{1}, \theta_{1}) (x_{2}, \theta_{2}) \cdots \in \statespace^{\omega}\), where for all \(i \geq 0\), \(\transprob{\theta_{i}}{\theta_{i+1}} > 0\) and \(x_{i+1} = \statematrix{\theta_{i}}x_{i}\).
Similarly, finite paths of length \(k \in \NN\) are sequences \((x_{0}, \theta_{0}) \cdots (x_{k}, \theta_{k}) \in \statespace^{k+1}\).
Given a state \(\xtheta \in \statespace\), \(\paths{\xtheta}\) and \(\paths[k]{\xtheta}\) describe the set of all infinite and finite paths of length~\(k\), respectively, beginning in \(\xtheta\).

The cylinder set~\(\cylset{\pi}\) of a finite path~\(\pi\) is the set of all infinite extensions of~\(\pi\).
The probability of such a cylinder set is defined via the corresponding cylinder set in the Markov chain~\cite{baier2008principles}:
\[
  \prob{\cylset{(x_{0}, \theta_{0}) \cdots (x_{k}, \theta_{k})}} = \prob{\cylset{\theta_{0} \cdots \theta_{k}}} = \prod_{{}i=0}^{k-1} \transprob{\theta_{i}}{\theta_{i+1}}
\]
The \(\sigma\)-algebra~\(\Sigma^{\mathfrak{J}}\) associated with \gls{mjls}~\(\mathfrak{J}\) is the smallest \(\sigma\)-algebra that contains all cylinder sets \(\cylset{\hat{\pi}}\) where \(\hat{\pi}\) ranges over all finite paths in \(\mathfrak{J}\).

\subsubsection*{\Glspl{gmp}.} A \acrfull{gmp}~\cite{ramponi2010connections} is a discrete-time stochastic process on a Borel set~\(X\), whose evolution is characterized by a stochastic kernel \(Q\colon X \times \mathfrak{B}(X) \to [0,1]\), where \(\mathfrak{B}(X)\) is the Borel \(\sigma\)-algebra on \(X\). 
\glspl{gmp} generalize \glspl{mjls}, as we can lift an $m$-dimensional \gls{mjls}~\(\mathfrak{J}\) to \glspl{gmp} by choosing \(X = \statespace\) and
\begin{equation}\label{eq:gmp-lifting}
  \Q{\xtheta}{B} = \sum_{\theta'=1}^{m} \indicator{B}{(\statematrix{\theta}x, \theta')} \cdot \transprob{\theta}{\theta'}
\end{equation}

\subsection{Stability of \Glspl{mjls}}

There are different notions of stability for \glspl{mjls}.
In the following, we summarize well-established results on \emph{mean} and \emph{mean-square stability}, which analyse the system state's expected value and uncentred variance, respectively, on the long run~\cite{lian2015mean,costa2005discrete}.
All results in the remainder of this section stem from~\cite{costa2005discrete}, notationally adjusted to our setting. 
The first and second (uncentred) moment at time \(k\in\NN\) of an \gls{mjls}~\(\mathfrak{J}\) with initial state \(\xtheta \in \statespace\) are defined as follows:
\begin{align*}
  \expectation{\xtheta}{x(k)}{} & = \sum_{\hspace{-15pt}\mathrlap{(x_{0}, \theta_{0}) \cdots (x_{k}, \theta_{k}) \in \paths[k]{\xtheta}}} x_{k} \cdot \prob{\cylset{\theta_{0} \cdots \theta_k}}                                                                         \\
  \variance{\xtheta}{x(k)}{}    & = \expectation{\xtheta}{x(k)x(k)\transpose{}}{} = \sum_{\hspace{-15pt}\mathrlap{(x_{0}, \theta_{0}) \cdots (x_{k}, \theta_{k}) \in \paths[k]{\xtheta}}} x_{k} x_{k}\transpose{} \cdot \prob{\cylset{\theta_{0} \cdots \theta_k}} \\
\end{align*}

Note that, as usual for multivariate random variables, the first moment is a vector of the first moments of each random variable, whereas the second (uncentred) moment is an \(n\times n\) symmetric matrix.
In this matrix, position \((i,j)\) represents the covariance between random variable \(x_{i}\) and \(x_{j}\), which are the respective continuous variables in our case.
Further note that while classically, variance is defined as the centred second moment \(\mathbb{E}\left[(X - \mathbb{E}\left[X\right])^2\right]\), this definition uses the uncentred moment.
However, \(\mathbb{E}\left[(X - \mathbb{E}\left[X\right])^2\right] = \mathbb{V}\left(X\right) - (\mathbb{E}\left[X\right])^2\).

\begin{definition}
  An \gls{mjls}~\(\mathfrak{J}\) is \emph{\gls{ms}} iff for \(k\to\infty\),
  \[
    \exists \mu\in\RR^n.\, \forall \xtheta\in\statespace.\, \expectation{\xtheta}{x(k)}{} \to \mu
  \]
  and it is \emph{\gls{mss}} iff for \(k\to\infty\),
  \[
    \exists v \in\RR^{n\times n}.\, \forall \xtheta\in\statespace.\, \variance{\xtheta}{x(k)}{} \to v
  \]
\end{definition}

Mean-square stability obviously implies mean stability.
Due to the quantifier ordering, both stability notions require the limit value to be the same for all initial states.
For our definition of \glspl{mjls}, this means that the only possible instantiation of \(\mu\) and \(v\) can be the constant zero vector and matrix, respectively.

To check an \gls{mjls}~\(\mathfrak{J}\) for stability, some further definitions are necessary.
For some initial state~\(\xtheta\in\statespace\) and Markov chain mode~\(\theta'\in\set{1,\dots,m}\),
\begin{align*}
  \expectation{\xtheta}{x(k)}{\theta(k) = \theta'} &= \sum_{\hspace{-15pt}\mathrlap{(x_{0}, \theta_{0})\cdots(x_{k}, \theta') \in \paths[k]{\xtheta}}} x_{k} \cdot \prob{\cylset{\theta_{0}\cdots \theta'}} \\
  \variance{\xtheta}{x(k)}{\theta(k) = \theta'} &= \sum_{\hspace{-15pt}\mathrlap{(x_{0}, \theta_{0})\cdots(x_{k}, \theta') \in \paths[k]{\xtheta}}} x_{k} x_{k}\transpose{} \cdot \prob{\cylset{\theta_{0}\cdots \theta'}}
\end{align*}
denote the \gls{mjls} state's expected value and uncentred variance at time~\(k\) with initial state~\(\xtheta\), restricted on the Markov chain mode~\(\theta'\) at time~\(k\).
Note that the sum of \(\expectation{\xtheta}{x(k)}{\theta(k) = i}\) over all \(i \in \set{1,\dots,m}\) is the overall expected state value at time~\(k\) starting in \(\xtheta\), \ie \(\expectation{\xtheta}{x(k)}{}\).
The analogous property for \(\sum_{i=1}^m \variance{\xtheta}{x(k)}{\theta(k) = i} = \variance{\xtheta}{x(k)}{}\) holds as well. Furthermore,  
 the following two equalities hold:
  \begin{align*}
    \expectation{\xtheta}{x(k+1)}{\theta(k+1) = j} & = \sum_{i=1}^{m} \transprob{i}{j} \cdot \statematrix{i} \cdot \expectation{\xtheta}{x(k)}{\theta(k) = i}                                \\
    \variance{\xtheta}{x(k+1)}{\theta(k+1) = j}    & = \sum_{i=1}^{m} \transprob{i}{j} \cdot \statematrix{i} \cdot \variance{\xtheta}{x(k)}{\theta(k) = i} \cdot \statematrix{i}\transpose{}
  \end{align*}
which can be exploited by an operator~\(\boperator_{\mathfrak{J}}\) that propagates the conditional expected value to the next time step, where
\(
  \boperator_{\mathfrak{J}} = (\probmatrix\transpose \kron \identmatrix{n}) \cdot \blkdiag{\statematrix{1},\dots,\statematrix{m}}
\).
This operator satisfies:
\[
  \begin{bmatrix}
    \expectation{\xtheta}{x(k+1)}{\theta(k+1) = 1} \\
    \vdots                                         \\
    \expectation{\xtheta}{x(k+1)}{\theta(k+1) = m}
  \end{bmatrix}  = \boperator_{\mathfrak{J}} \cdot
  \begin{bmatrix}
    \expectation{\xtheta}{x(k)}{\theta(k) = 1} \\
    \vdots                                     \\
    \expectation{\xtheta}{x(k)}{\theta(k) = m}
  \end{bmatrix}
\]
Similarly, the operator
\(
  \toperator_{\mathfrak{J}} = (\probmatrix\transpose\kron \identmatrix{n^2}) \cdot \blkdiag{\statematrix{1} \kron \statematrix{1},\ldots, \statematrix{m}\kron \statematrix{m}}
\) propagates the conditional uncentred variance to the next time step.

Since both operators characterize the stepwise evolution of the respective moments, it is their characteristics that determine the respective limiting behaviour. 
Thus, we can resort to eigenvalue analysis to determine stability with respect to mean and uncentred variance: 

\begin{reptheorem}{thm:ms-and-mss-spectral-radius}[\cite{costa2005discrete}]
  An \gls{mjls}~\(\mathfrak{J}\) is \gls{ms} iff \(\spectral{\boperator_{\mathfrak{J}}} < 1\).
  It is \gls{mss} iff \(\spectral{\toperator_{\mathfrak{J}}} < 1\).
\end{reptheorem}

\begin{example}
  The \gls{mjls} from \autoref{fig:running-example} is neither  \gls{mss} nor \gls{ms}, because \({\spectral{\boperator_{\mathfrak{J}}} \approx 11.4}\).
\end{example}

\section{\glsentrytitlecase{pctl}{long} for \gls{mjls}}\label{sec:pctl}

Originally established to specify properties on probabilistic systems such as Markov chains~\cite{hansson1994logic}, \gls{pctl} is a temporal logic that suits well our modelling framework of \glspl{mjls}.
We define its syntax as usual:
\begin{align*}
  \Phi    & \Coloneqq \mi{true} \bnfmid a \bnfmid \Phi_1 \land \Phi_2 \bnfmid \neg \Phi \bnfmid \PCTLprob{J}{\varphi} \\
  \varphi & \Coloneqq \PCTLnext{\Phi} \bnfmid \PCTLbuntil{\Phi_1}{\Phi_2}{k} \bnfmid \PCTLuntil{\Phi_1}{\Phi_2}
\end{align*}
where \(a \in \ap_{x} \sqcup \ap_{\theta}\), \(J \subseteq [0, 1]\), and \(k \in \NN\).
We denote the step-bounded fragment of \gls{pctl} with \gls{pctlm}, \ie all constructs except the unbounded until.
We define \(\ap_{x}\) and \(\ap_{\theta}\) as the set of atomic propositions on the continuous state and the discrete state, respectively, with \(\sqcup\) denoting disjoint union.
States \(\xtheta \in \statespace\) can be labelled with atomic propositions:
\[
  L(\xtheta) = \set{a \in \ap_{x} \where x \in \Pi_a} \sqcup \set{a \in \ap_{\theta} \where a \in L_{\theta}(\theta)}
\]
where \(\Pi_a\) are convex polytopes, which are bounded convex subsets of \(\RR^n\) that can be represented equivalently either as the intersection of finitely many half-spaces or as the convex hull of finitely many vertices~\cite{ziegler1993polytopes}.
We restrict the labelling of the continuous state with atomic propositions to convex polytopes, since this class of sets provides a suitable balance between expressiveness and computational tractability.
In particular, membership of a state in a convex polytope can be checked efficiently via its halfspace representation~\cite{groetschel1993geometric}.
Further, the model-checking algorithm presented below requires repeated set operations and transformations.
Convex polytopes are well suited as a basis for this, since operations such as affine (and, thus, linear) transformations and intersection preserve the polytopic structure~\cite{ziegler1993polytopes}.
Thus, the sets arising during the recursive evaluation of formulas remain computationally manageable.
By contrast, more general classes of sets may fail to retain such a tractable representation under these operations, making them less suitable for algorithmic verification.

Let \(\mathfrak{J}\) be an \gls{mjls}, \(a \in \ap_{x} \sqcup \ap_{\theta}\), and \(\xtheta \in \statespace\).
We define the satisfaction relation \(\vDash\) of \gls{pctl} state formulas analogously to Markov chains~\cite{baier2008principles}:
\begin{align*}
   & \xtheta  \vDash \mi{true}            &  &                                                                                       &  & \xtheta  \vDash a                   &  & \text{iff } a \in L(\xtheta)                                  \\
   & \xtheta \vDash \neg \Phi             &  & \text{iff } \xtheta \nvDash \Phi                                                                        &  & \xtheta  \vDash \Phi_1 \land \Phi_2 &  & \text{iff } \xtheta \vDash \Phi_1 \land \xtheta \vDash \Phi_2 \\
   & \xtheta \vDash \PCTLprob{J}{\varphi} &  & \text{iff } \mathrlap{\prob{\set{\pi \in \paths{\xtheta} \where \pi \vDash \varphi}} \in J}\hspace{2cm}
\end{align*}
Similarly, we define for an infinite path \(\pi\in\statespace^{\omega}\):
\begin{align*}
  \pi & \vDash \PCTLnext{\Phi}                &  & \text{iff } \pi[1] \vDash \Phi                                                                                 \\
  \pi & \vDash \PCTLbuntil{\Phi_1}{\Phi_2}{k} &  & \text{iff } \exists j \leq k.\, \pi[j] \vDash \Phi_2 \land \forall i < j.\, \pi[i] \vDash \Phi_1 \\
  \pi & \vDash \PCTLuntil{\Phi_1}{\Phi_2}     &  & \text{iff } \exists j \geq 0.\, \pi[j] \vDash \Phi_2 \land \forall i < j.\, \pi[i] \vDash \Phi_1
\end{align*}
where for a path \(\pi = (x_0, \theta_0)(x_1, \theta_1)\cdots\) and \(i \geq 0\), \(\pi[i]\) denotes state \((x_i, \theta_i)\) of~\(\pi\).

\begin{reptheorem}{thm:pctl-measurability}
  For each \gls{pctl} path formula~\(\varphi\) and state \(\xtheta \in \statespace\) of an \gls{mjls}~\(\mathfrak{J}\), the set \(\set{\pi \in \paths{\xtheta} \where \pi \vDash \varphi}\) is measurable, \ie it is an element of \(\Sigma^{\mathfrak{J}}\).
\end{reptheorem}

So, the probabilities of the relevant sets can be constructed from the available operations on the cylinder sets.

\subsubsection{Relation to \gls{gmp}.}

As per the discussion above, \glspl{mjls} can be embedded into \acrfullpl{gmp}~\cite{ramponi2010connections} with the lifting of \autoref{eq:gmp-lifting}.
This is consistent with the respective definitions of \gls{pctl}:
\begin{replemma}{lem:gmp-equiv-algo}
  Let Borel set \(X = \statespace\) with the associated Borel \(\sigma\)-algebra and stochastic kernel~\(Q\) from \autoref{eq:gmp-lifting}.
  Let \(\vDashGMP\) denote the \gls{pctl} satisfaction relation for \glspl{gmp} as given in~\cite{ramponi2010connections}.
  Then, for any \gls{pctl} formula~\(\Phi\) and state~\(\xtheta \in \statespace\), \(\xtheta \vDashGMP \Phi\) iff~\({\xtheta \vDash \Phi}\).
\end{replemma}

\subsubsection{Model Checking \Acrshort{pctlm}.}

Just like \gls{pctl} model checking on Markov chains (or CTL on Kripke structures), its model checking on \glspl{mjls} is based on a bottom-up traversal of the parse tree of the formula in question.
Along the way, the algorithm computes the satisfaction set \(\sat{\Phi} = \set{\xtheta \in \statespace \where \xtheta \vDash \Phi}\) of the respective subformula~\(\Phi\) as described in the following, where we first restrict to \gls{pctlm} for the sake of exposition.

\begin{reptheorem}{thm:correctness-mc-boundedpctl}
  Let \(\mathfrak{J}\) be an \gls{mjls} and \(\Phi\) a \gls{pctlm} formula.
  For every \(\xtheta \in \statespace\), it holds that:
  \begin{align*}
    &\sat{\ttrue}= \statespace && \sat{a}                    = 
    \begin{cases}
      \Pi_{a}\times \set{1,\ldots,m} &\text{if } a \in \ap_{x}\\
      \RR^{n} \times L_{\theta}(\theta) &\text{if } a \in \ap_{\theta}
    \end{cases}   \\
    &\sat{\neg \Phi} = \statespace \setminus \sat{\Phi} && \sat{\Phi_1 \land \Phi_2}  = \sat{\Phi_1} \cap \sat{\Phi_2}                        \\
    &\sat{\PCTLprob{J}{\PCTLnext{\Phi}}}                    = \mathrlap{\set{\xtheta \in \statespace \where \sum_{\theta' = 1}^{m} \indicator{\sat{\Phi}}{(\statematrix{\theta}x, \theta')} \cdot \transprob{\theta}{\theta'} \in J}}                                                                                         \\
    &\sat{\PCTLprob{J}{\PCTLbuntil{\Phi_{1}}{\Phi_{2}}{k}}}  = \mathrlap{\set{\xtheta \in \statespace \where U_{k}\left(\xtheta\right) \in J}}
  \end{align*}
  where
  \begin{align*}
    U_{0}\left(\xtheta\right)
     & = \indicator{\sat{\Phi_{2}}}{\xtheta} \\
    U_{k+1}\left(\xtheta\right)
     & =
    U_{0}\left(\xtheta\right)
    + \indicator{\sat{\Phi_{1}} \setminus \sat{\Phi_{2}}}{\xtheta} \sum_{\theta' = 1}^{m} \transprob{\theta}{\theta'} \cdot U_{k}\left((\statematrix{\theta}x,\theta')\right)
  \end{align*}
\end{reptheorem}

\subsubsection{Model Checking \(\PCTLuntil{\Phi_1}{\Phi_2}\).}

We now turn to the model checking of the full logic.
The only missing case is the computation of \(\sat{\PCTLuntil{\Phi_1}{\Phi_2}}\), where we can assume that $\sat{\Phi_1}$ and $\sat{\Phi_2}$ are already computed. 
In principle, the work of~\cite{ramponi2010connections} provides a guideline, since it establishes a connection between \gls{pctl} and dynamic programming (albeit not formulated for satisfaction sets, but for individual states) for the \gls{gmp} setting. 
More concretely, it views the until~\(\mathcal{U}\) as the limit of the bounded-until~\(\mathcal{U}^{\leq k}\) for \(k \to \infty\), which is quite natural.
However, the existence of this limit appears not to be guaranteed. 
The reason is as follows: 
Since \glspl{lds} are a special case of \glspl{mjls} (with only a single mode and \(\probmatrix = \begin{bsmallmatrix}1\end{bsmallmatrix}\)), \glspl{lds} are also a special case of \glspl{gmp}. 
As an \gls{lds} only exhibits a single path given an initial state, reachability in the path-based setting (such as \gls{ltl} and \acrshort{mso}) coincides with reachability in the branching-time view, such as in \gls{pctl}.

Thus, the negative results with respect to decidability for \glspl{lds} reachability carry over to both \glspl{mjls} and \glspl{gmp}.
To the best of our knowledge, this implication for \glspl{gmp} has not been observed before.
Notably: if \(\sat{\Phi_2}\) (\ie the space that is supposed to be reached eventually) is at least 4-dimensional, \gls{lds} reachability is an open problem~\cite{chonev2016orbit} linked to the Skolem problem.
(On the other hand, point-to-point reachability is decidable for \glspl{lds} in polynomial time, and so is point-to-hyperplane reachability for system dimensions up to~\(4\)~\cite{karimov2022decidable}.)

This contrast is striking.
While the fixed-point characterization for \glspl{gmp}~\cite{ramponi2010connections} suggests a conceptually simple recursive computation, even basic reachability questions for the strictly simpler class of \glspl{lds} are unresolved in general.
This indicates that the apparent simplicity of the recursion does not translate into algorithmic tractability. 
We conclude this section with the observation that the model-checking problem for \gls{pctl} beyond \gls{pctlm} on \glspl{mjls} is open. 
\section{Adding Stability Properties to \gls{pctl}}\label{sec:expectation}

Our main motivation, as stated earlier, lies in the fact that a negative answer to the stability analysis of an \gls{mjls} may be overly conservative.
Two scenarios immediately come to mind in which a more refined answer may be useful:
\begin{enumerate}
  \item Even though the moment converges for all initial states, it does not converge to the same value (\ie the origin) for all initial states.
        Thus, the \gls{mjls} is not considered stable.
        However, in this scenario, convergence itself might be considered of higher importance than convergence to the origin.
  \item For some set of initial states, the moment diverges.
        Again, the \gls{mjls} is not considered stable.
        However, these initial states may be insignificant, \eg because they correspond to impossible configurations in the real-world process.
        In this case, one would like to restrict the stability analysis to the set of relevant initial states.
\end{enumerate}

\begin{example}
  A combination of the two scenarios above is well possible.
  Recall that the system illustrated in \autoref{fig:running-example} is not \gls{ms}.
  In fact, for most initial states, the first moment diverges.
  However, there exist initial states whose first moment stabilizes in the long run (though not all to a common value).
  For instance, the long-run expected value of initial state \(\begin{bsmallmatrix}16.047 & -0.254\end{bsmallmatrix}\transpose\) with \(\theta = 1\) is \(\begin{bsmallmatrix}17.377 & 0.8\end{bsmallmatrix}\transpose\), while for \(\begin{bsmallmatrix}0.324 & 31.814\end{bsmallmatrix}\transpose\) with \(\theta = 2\) it is \(\begin{bsmallmatrix}-2.69 & -0.124\end{bsmallmatrix}\transpose\).
\end{example}
These observations motivate a logical framework capable of specifying convergence properties relative to sets of initial states of interest.
We address this with an extension of \gls{pctl} for \glspl{mjls} to express properties on the system's first and second moments, given a specific initial state.
To arrive there, we extend the atomic propositions of \gls{pctl} with corresponding propositions in both step-bounded and step-unbounded variants.
Just like the original atomic propositions, these are state formulas and can be nested within other \gls{pctl} formulas.
The converse, \ie nesting other \gls{pctl} (state) formulas within the new operators as it is possible for \eg \(\mathcal{X}\), is not supported in our choice of extension.
Other than that, the semantics as defined in \autoref{sec:pctl} remains unchanged.
We stipulate 
\[
  a \in \mi{AP}_{x} \sqcup \mi{AP}_{\theta} \sqcup \set{\PCTLbexp{k}{\Pi}, \PCTLexp{\Pi}, \PCTLbvar{k}{\Xi}, \PCTLvar{\Xi} \where k\in\NN}
\]
where \(\Pi\) and \(\Xi\) are convex polytopes in \(\RR^{n}\) and \(\RR^{\frac{1}{2}n(n+1)}\), respectively.
For a given convex polytope~\(\Pi\), \(\PCTLbexp{k}{\Pi}\) is meant to label all states in which the first moment up to step \(k\) on average lies in~\(\Pi\), and \(\PCTLexp{\Pi}\) analogously all states in which the corresponding long-run expected value lies in \(\Pi\).
The operators \(\PCTLbvar{k}{\Xi}\) and \(\PCTLvar{\Xi}\) are defined in the same way, but for the second moment.
We restrict~\(\Pi\) and~\(\Xi\) to convex polytopes again for the possibility of efficient representation, while still providing an intuitive modelling framework.
Since the second moment is an \(n \times n\) matrix, the bounding polytope~\(\Xi\) has to reflect this and put bounds on \(n^2\) variables.
Its dimensionality can be reduced to \(\frac{1}{2}n(n+1)\) since the second moment matrix is symmetric.
We use the \(\mvech{}\) operator (see \autoref{sec:notation}) to stack the lower triangular part of a matrix into a vector.
With this, we define:
\begin{align*}
   & \PCTLbexp{k}{\Pi} \in L(\xtheta) \quad\text{ iff }\quad
  \frac{1}{k+1}\sum_{i=0}^{k} \, \expectation{\xtheta}{x(i)}{} \quad\in \Pi                         \\
   & \PCTLexp{\Pi} \in L(\xtheta) \quad\text{ iff }\quad
  \lim_{k \to \infty} \frac{1}{k+1} \sum_{i = 0}^{k} \, \expectation{\xtheta}{x(i)}{} \quad\in  \Pi \\
   & \PCTLbvar{k}{\Xi} \in L(\xtheta) \quad\text{ iff }\quad
  \mvech{\frac{1}{k+1}\sum_{i=0}^{k} \, \variance{\xtheta}{x(i)}{}} \quad\in \Xi                    \\
   & \PCTLvar{\Xi} \in L(\xtheta) \quad\text{ iff } \quad
  \mvech{\lim_{k \to \infty} \frac{1}{k+1} \sum_{i = 0}^{k} \, \variance{\xtheta}{x(i)}{}} \quad\in \Xi
\end{align*}

The connection to mean and mean-square stability is immediate:
in an \gls{mjls} that is \gls{ms} or \gls{mss}, respectively, the entire state space is labelled with \(\PCTLexp{\Pi}\) or \(\PCTLvar{\Xi}\) for any polytope~\(\Pi\) or \(\Xi\) that contains the origin.
However, the converse implication does not hold.
This is due to the use of time averages (Cesàro limits):
even if the time-averaged first or second moment converges, the corresponding moment sequence itself may fail to converge.
In particular, oscillating moment sequences may admit a well-defined Cesàro limit while lacking a pointwise limit.
This choice is consistent with the notion of limiting distributions in \glspl{dtmc}~\cite{kemeny1976finiteMCs,andova2004rewards} and contrasts possible alternative characterizations using limit superior and inferior (\cf~\cite{baier2025multiplicative}).
Hence, the proposed operators capture a slightly different notion of convergence than classical \gls{ms} and \gls{mss}, focusing on long-run average behaviour rather than pointwise limits, and remaining well-defined even in the presence of oscillations.
Further, note that variance is typically defined as the centred second moment.
However, using the uncentred second moment aligns more closely with classical mean-square stability.
Centring can be incorporated directly into the specification of the polytope~\(\Xi\) if desired.

Although the step-unbounded operators describe limit behaviour, and thus finite path prefixes do not affect the limit, nesting them within \gls{pctl} formulas is still meaningful.
Even a single transition may change the system dynamics to the extent that a state not satisfying $\mathcal{E}_{\Pi}$ (or $\mathcal{V}_{\Xi}$)  transitions to one that does, implying that \(\PCTLprob{J}{\PCTLnext{\PCTLexp{\Pi}}}\) for some interval~\(J\) can still be satisfied.
\begin{wrapfigure}[16]{r}{0.37\linewidth}
  \vspace{-5pt}
  \centering%
  \input{res/polytopes.pgf}%
  \caption{A visualization of the polytopes from \autoref{ex:E-formulas}.}%
  \label{fig:polytopes}%
\end{wrapfigure}

\begin{example}\label{ex:E-formulas}
  Consider again the \gls{mjls} given in \autoref{fig:running-example}.
  Let \(\Pi \in \RR^2\) be the convex polytope defined by the vertices \((30, 14), (22, 35), (-32, -5)\), and \((36, -27)\).
  We are interested in the set of states that satisfy \(\PCTLexp{\Pi}\).
  Furthermore, we would like to allow for a five-step grace period to reach a state with the desired long-run expected value, while still avoiding unsafe states.
  We define another convex polytope~\(\Lambda\) with vertices \((-15, -10), (50, -29), (45, 50)\), and \((0, 21)\) of safe states.
  With a slight abuse of notation, atomic proposition~\(\Lambda\) labels all states in polytope~\(\Lambda\).
  Requiring the reach-avoid probability to be at least \(0.3\), we desire to check the PCTL formula \(\PCTLprob{\geq 0.3}{\PCTLbuntil{\Lambda}{\PCTLexp{\Pi}}{5}}\).
  For an illustration of the polytopes, see \autoref{fig:polytopes}.
  We shall return to the evaluation of the formulas in \autoref{ex:sat-E-formulas}.
\end{example}

In the following, we present the model-checking algorithms for \(\PCTLbexp{k}{\Pi}\) and \(\PCTLexp{\Pi}\).
The treatment of the variance operators follows analogously and will be discussed afterwards.
For the remainder of this section, let \(\mathfrak{J}\) be an \gls{mjls} with \(m\) modes and \(n\) continuous variables.

\subsubsection*{\(\sat{\PCTLbexp{k}{\Pi}}\).}

The satisfaction set of \(\PCTLbexp{k}{\Pi}\) for an \gls{mjls}~\(\mathfrak{J}\) relies on propagating the operator~\(\boperator_{\mathfrak{J}}\) for \(k\)~steps.
Since~\(\boperator_{\mathfrak{J}}\) propagates the conditional expected value, we initialize with the corresponding values at time~\(0\), \ie for a given state~\(\xtheta \in \statespace\):
\[
  \begin{bmatrix}\expectation{\xtheta}{x(0)}{\theta(0) = 1} & \cdots & \expectation{\xtheta}{x(0)}{\theta(0) = \theta} & \cdots & \expectation{\xtheta}{x(0)}{\theta(0) = m}\end{bmatrix}\transpose{}
\]
which reduces to \(\begin{bmatrix}0 & \cdots & 0 & x & 0 & \cdots & 0\end{bmatrix}\transpose{} = \basevect{\theta} \kron x\).
Finally, observe that:
\begin{equation}\label{eq:exp-via-bj}
  \expectation{\xtheta}{x(k)}{} = \summn{mn}{n}{\boperator_{\mathfrak{J}}^{k}(\basevect{\theta}\kron x)}
\end{equation}
The satisfaction set is then obtained as follows:

\begin{theorem}\label{thm:correctness-mc-boundedexp}
  Let \(\mathfrak{J}\) be an \gls{mjls} with \(n\) state variables, let \(k\in\NN\) and \(\Pi\) be a convex polytope in~\(\RR^{n}\).
  For every \(\xtheta \in \statespace\), it holds that:
  \[
    \sat{\PCTLbexp{k}{\Pi}} = \set{\xtheta \in \statespace \where \summn{mn}{n}{\left(\frac{1}{k+1}\sum_{i=1}^{k}\boperator_{\mathfrak{J}}^{i}\right)(\basevect{\theta} \kron x)} \in \Pi}
  \]
\end{theorem}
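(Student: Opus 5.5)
The plan is to unfold the definition of the label \(\PCTLbexp{k}{\Pi}\) and rewrite the averaged first moment using \autoref{eq:exp-via-bj}, relying only on the linearity of \(\summn{mn}{n}{}\). By definition of the labelling and of \(\sat{\cdot}\), a state \(\xtheta\) lies in \(\sat{\PCTLbexp{k}{\Pi}}\) iff \(\frac{1}{k+1}\sum_{i=0}^{k}\expectation{\xtheta}{x(i)}{} \in \Pi\). So it suffices to show, for every \(\xtheta\), that this averaged vector equals the \(\summn{mn}{n}{}\)-expression in the statement.

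First I would establish \autoref{eq:exp-via-bj} properly, by induction on \(k\).

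\begin{itemize}
\item Base case \(k=0\): the stacked vector of conditional expectations is \(\basevect{\theta}\kron x\), since the path starts deterministically in mode \(\theta\) with continuous state \(x\).
\item Inductive step: apply the propagation identity for \(\boperator_{\mathfrak{J}}\), which follows from the recursion \(\expectation{\xtheta}{x(k+1)}{\theta(k+1)=j}=\sum_i \transprob{i}{j}\statematrix{i}\expectation{\xtheta}{x(k)}{\theta(k)=i}\). The \(j\)-th block of \((\probmatrix\transpose\kron\identmatrix{n})\,\blkdiag{\statematrix{1},\dots,\statematrix{m}}\,v\) is exactly \(\sum_i \transprob{i}{j}\statematrix{i}v_i\).
\item Summing the \(m\) blocks gives the unconditional expectation, because the conditional expectations over the modes sum to \(\expectation{\xtheta}{x(k)}{}\). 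Summing blocks is precisely \(1_m\transpose\kron\identmatrix{n}\), i.e.\ \(\summn{mn}{n}{}\).
\end{itemize}

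Then, since \(\summn{mn}{n}{}\) is a fixed linear map, I would pull the sum and the factor \(\frac{1}{k+1}\) inside:
\[
\frac{1}{k+1}\sum_{i=0}^{k}\expectation{\xtheta}{x(i)}{} = \summn{mn}{n}{\Big(\tfrac{1}{k+1}\textstyle\sum_{i=0}^{k}\boperator_{\mathfrak{J}}^{i}\Big)(\basevect{\theta}\kron x)}.
\]
Membership in \(\Pi\) then transfers verbatim, which gives the claimed set equality.

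The main obstacle is the index range rather than any analytic difficulty. The definition of the label averages over \(i=0,\dots,k\), whereas the displayed formula in the statement sums from \(i=1\). Omitting the \(i=0\) term, \(\boperator_{\mathfrak{J}}^{0}=\identmatrix{mn}\), drops the contribution \(\summn{mn}{n}{\basevect{\theta}\kron x}=x\). The derivation therefore yields the statement with the sum starting at \(i=0\), equivalently with an additional \(\identmatrix{mn}\) term. I would present the proof for that corrected range and note that the lower limit \(i=1\) in the display must be read as \(i=0\) for the equality to hold. A minor additional check is that the block-stacking convention of \(\basevect{\theta}\kron x\) matches the column-stacking used by \(\summn{mn}{n}{}\), i.e.\ mode-major blocks of length \(n\); this holds directly from the definitions.
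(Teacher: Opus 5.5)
Your proposal is correct and follows the paper's own argument: unfold the label definition, apply \autoref{eq:exp-via-bj}, and use linearity of \(\summn{mn}{n}{}\). Your induction only makes explicit the propagation step that the paper takes over from the cited literature. You are also right about the lower summation limit: the displayed set matches the definition of \(\PCTLbexp{k}{\Pi}\) only if the sum starts at \(i=0\), which is how the paper writes its analogous formula for \(\PCTLbvar{k}{\Xi}\).
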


While the computation of the satisfaction seems to iterate over the uncountable state space, this is actually not necessary.
First, note that propagating \(\boperator\) and computing the Cesàro average is independent of the initial states.
The remainder is then computed for each mode of the Markov chain separately (and forming the union of the resulting sets):
Given a mode~\(\theta\), the multiplication of the Cesàro average and \((\basevect{\theta} \kron x)\) amounts to \(mn\) linear expressions over only \(n\) variables.
\(\summn{mn}{n}{}\) and the constraint of membership in~\(\Pi\) reduces this to an \(n\)-dimensional equation system of \(n\) variables.
The correctness for this part of the model-checking algorithm follows immediately from the above derivation.

Again, since the algorithm is based on solving linear equation systems and computing set membership of a convex polytope, the resulting satisfaction set remains representable.

\subsubsection*{Computing Long-run Cesàro Limit.}
First, we describe how to compute the Cesàro-averaged long-term first moment for one specific initial state~\(\xtheta\in\statespace\), \ie
\(
\lim_{k \to \infty} \frac{1}{k+1} \sum_{i = 0}^k \expectation{\xtheta}{x(i)}{}
\). We then turn to the computation of $\sat{\PCTLexp{\Pi}}$.

Obviously, if the initial state vector is the zero vector or the \gls{mjls} is \gls{ms}, the limit trivially is zero.
Otherwise, the analysis relies on linear-algebraic computations revolving around \autoref{eq:exp-via-bj}.
However, the approach differs in technical aspects if \(\boperator_{\mathfrak{J}}\) is not diagonalizable.

If \(\boperator_{\mathfrak{J}}\) is diagonalizable, it is possible to represent the vector~\(\basevect{\theta}\kron x\) of conditional expected values at time~\(0\) as a linear combination of the (complete set of potentially complex and linearly-independent) eigenvectors \(v_{1}\),\ldots, \(v_{mn}\) of~\(\boperator_{\mathfrak{J}}\), \ie \(\basevect{\theta} \kron x = \sum_{j = 1}^{mn} a_{j}v_{j}\).
Since \(\boperator_{\mathfrak{J}} v_{j} = \lambda_{j} v_{j}\) for all eigenpairs \((\lambda_{j}, v_{j})\), we get:
\begin{align*}
  \lim_{k \to \infty} \frac{1}{k+1} \sum_{i = 0}^k \expectation{\xtheta}{x(i)}{}
      & = \lim_{k \to \infty} \frac{1}{k+1}\sum_{i=0}^{k} \summn{mn}{n}{\boperator_{\mathfrak{J}}^i (\basevect{\theta} \kron x)} \\
   & = \lim_{k \to \infty} \frac{1}{k+1}\sum_{i=0}^{k} \summn{mn}{n}{\sum_{\smash{j} = 1}^{mn} a_{j}\lambda_{j}^{i}v_{j}}
\end{align*}

For those eigenvalues~\(\lambda_{j}\) that lie on or within the unit circle, \ie \(\abs{\lambda_{j}} \leq 1\), the limit can be determined per summand.
It is \(\summn{mn}{n}{a_{j}v_{j}}\) for \(\lambda_{j} = 1\) and \(0\) otherwise.
If there are summands with~\(\abs{\lambda_{j}} > 1\), the limit only exists if those summands are cancelled out, \ie \(\sum_{j | \lambda_j = \lambda} \summn{mn}{n}{a_{j}v_{j}} = 0\) for every \(\abs{\lambda} > 1\).

If \(\boperator_{\mathfrak{J}}\) is not diagonalizable, it is not possible to perform the above computations as there is no complete set of eigenvectors.
Instead, we use the Jordan normal form of \(\boperator_{\mathfrak{J}} = M \cdot \blkdiag{J_{m_{1}}(\lambda_{1}), \ldots, J_{m_{b}}(\lambda_{b})} \cdot M^{-1}\), where \(\lambda_{j}\) are eigenvalues with multiplicity~\(m_{j}\) and \(J_{m_{j}}(\lambda_{j})\) are the corresponding Jordan blocks.
A Jordan block \(J_{m_{j}}(\lambda_{j})\) is an \(m_{j} \times m_{j}\) square matrix with all-\(\lambda_{j}\) on its main diagonal and all-ones on the superdiagonal.
Substituting the Jordan normal form into the limit to be computed, we obtain:
\begin{align}
      & \lim_{k \to \infty} \frac{1}{k+1}\sum_{i=0}^{k} \summn{mn}{n}{\boperator_{\mathfrak{J}}^i (\basevect{\theta} \kron x)} = \label{eq:limit-jordan}                                                                     \\
   & \lim_{k \to \infty} \frac{1}{k+1}\sum_{i=0}^{k} \summn{mn}{n}{M\left(\blkdiag{J_{m_{1}}^{i}(\lambda_{1}),\ldots,J_{m_{b}}^{i}(\lambda_{b})}\right)M^{-1} (\basevect{\theta} \kron x)} \nonumber
\end{align}

For some \(i \geq m\), when zooming into Jordan block~\(J_{m}^{i}(\lambda)\), the \(k\)-th diagonal (\(k\geq 0\)) is constant and equal to \(\binom{i}{k}\lambda^{i-k}\).
In particular, the main diagonal (\ie 0-th diagonal) is equal to \(\lambda^{i}\).
The Cesàro limit of the above term can then be computed for every Jordan block separately and depends on the corresponding eigenvalue~\(\lambda\).
For eigenvalues~\(\lambda_{j}\) inside the unit circle, the limits of all entries of the Jordan block are \(0\).
For eigenvalues~\(\lambda_{j}\) on the unit circle, the limits on the main diagonal exist (\(1\) for \(\lambda = 1\) and \(0\) otherwise) but not for the superdiagonals.
For eigenvalues~\(\lambda_{j}\) outside the unit circle, all limits of the upper triangular matrix do not exist.
We compute the limit of each block symbolically as follows:
For a Jordan block~\(J_{m_{j}(\lambda_{j})}\) where \(\abs{\lambda_{j}} < 1\), we simply create a zero matrix.
If \(\abs{\lambda_{j}} \geq 1\), we create a matrix~\(K_{j}\) and fill the diagonals corresponding to non-converging entries with symbolic values~\(\infty_{ja_{j}}\).
Each non-converging diagonal is filled with a different symbolic value, denoted by the second index of~\(\infty_{ja_{j}}\), because the different non-converging sequences cannot cancel each other out.
Finally, we combine all \(K_{j}\) to a block matrix and perform the outer multiplications of \autoref{eq:limit-jordan}.
If all symbolic values vanish, the limit can be computed by applying \(\summn{mn}{n}{}\) to the multiplication result.

\subsubsection*{\(\sat{\PCTLexp{\Pi}}\).}

\begin{algorithm}[t]
  \caption{Satisfaction set of \(\PCTLexp{\Pi}\)}
  \label{alg:sat-long-term-exp}
  \begin{algorithmic}[1]
    \Procedure{SatE}{\(\Pi\)}
    \State \(\boperator_{\mathfrak{J}} \gets (\probmatrix\transpose \kron \identmatrix{n})\cdot\Call{BlkDiag}{\statematrix{1}, \ldots, \statematrix{m}}\)
    \If{\(\spectral{\boperator_{\mathfrak{J}}} < 1\)} \Comment{if \gls{ms}, convergence to origin for every state}
    \State \algorithmicif\ \(\begin{bsmallmatrix}0 & \cdots & 0\end{bsmallmatrix}\transpose\in \Pi\)
    \algorithmicthen\ \Return \(\statespace\)
    \algorithmicelse\ \Return \(\emptyset\)
    \ElsIf{\(\boperator_{\mathfrak{J}}\) is diagonalizable}
    \State \((\lambda_{1}, v_{1})\), \ldots, \((\lambda_{\mi{mn}}, v_{\mi{mn}}) \gets \Call{Eigenpairs}{\boperator_{\mathfrak{J}}}\)
    \LComment{move to coefficient domain}
    \State \(U \gets \set{(a_{1}, \ldots, a_{\mi{mn}}) \in \RR^{\mi{mn}} \where \exists \xtheta \in \statespace.\, \smash{\basevect{\theta}} \kron x = \sum_{j=1}^{mn} a_{j}v_{j}}\)
    \LComment{exclude all that diverge}
        \State \(V \gets U \cap \set{(a_{1}, \ldots, a_{\mi{mn}}) \in \RR^{\mi{mn}} \where \forall \abs{\lambda} > 1.\, \sum_{j | \lambda_{j} = \lambda} \summn{mn}{n}{a_{j}v_{j}} = 0}\)\label{line:set-to-0}
    \LComment{exclude all with result \(\notin \Pi\)}
    \State \(W \gets V \cap \set{(a_{1}, \ldots, a_{\mi{mn}}) \in \RR^{\mi{mn}} \where \sum_{j | \lambda_{j} = 1} \summn{mn}{n}{a_{j}v_{j}} \in \Pi}\)\label{line:element-H}
    \LComment{back to state domain}
    \State \Return \(\set{\xtheta \in \statespace \where \exists (a_{1}, \ldots, a_{\mi{mn}}) \in W.\, \basevect{\theta} \kron x = \sum_{j=1}^{mn} a_{j}v_{j}}\)
    \Else
    \State \(M\), \((\lambda_{1}, a_{1}), \ldots, (\lambda_{\ell}, a_{\ell}) \gets \Call{JordanDecomp}{\boperator_{\mathfrak{J}}}\)
    \ForAll{\(j \gets 1\) to \(\ell\)}
    \State \(K_{j} \gets \zeromatrix{a_{j}}\) \Comment{start off with zero matrix}
    \LComment{\textsc{FillDiags} fills \(j\)-th diag.\@ with \(j\)-th arg., starting with main diag.}
    \If{\(\abs{\lambda_{j}} > 1\)}
    \(K_{j} \gets \Call{FillDiags}{\infty_{j1}, \infty_{j2}, \ldots, \infty_{ja_{j}}}\)
    \ElsIf{\(\lambda_{j} = 1\)}
    \(K_{j} \gets \Call{FillDiags}{1, \infty_{j2}, \ldots, \infty_{ja_{j}}}\)
    \ElsIf{\(\abs{\lambda_{j}} = 1\)}
    \(K_{j} \gets \Call{FillDiags}{0, \infty_{j2}, \ldots, \infty_{ja_{j}}}\)
    \EndIf
    \EndFor
    \State \(\mi{r} \gets \set{(x, \theta) \in U \where \summn{mn}{n}{M \cdot \Call{BlkDiag}{K_{1},\ldots, K_{\ell}} \cdot M^{-1} (\basevect{\theta} \kron x)} \in \Pi}\)
    \State \Return \(\mi{r}\) \Comment{\(\in \Pi\) implies that no \(\infty\) is contained in result}
    \EndIf
    \EndProcedure
  \end{algorithmic}
\end{algorithm}

\autoref{alg:sat-long-term-exp} computes \(\sat{\PCTLexp{\Pi}}\) in a fashion similar to what the above algorithm does for a single initial state.
However, starting with the entire state space, this algorithm removes all initial states that do not satisfy the formula step by step via set intersections.

The first step is again to check for mean stability.
If the \gls{mjls} is \gls{ms}, \(\sat{\PCTLexp{\Pi}}\) encompasses the entire state space provided the polytope~\(\Pi\) contains the origin and evaluates to the empty set otherwise.
If the system is not \gls{ms}, we again distinguish between diagonalizable and non-diagonalizable \(\boperator_{\mathfrak{J}}\).

For diagonalizable~\(\boperator_{\mathfrak{J}}\), we proceed via linear combinations of the eigenvectors of~\(\boperator_{\mathfrak{J}}\).
In particular, we first compute the set of coefficients \(a_{1}\) to \(a_{mn}\) for which the result of \(\sum_{j=1}^{mn} a_{j}v_{j}\) is a vector of shape \(\basevect{\theta} \kron x\) for some \(\xtheta\in\statespace\).
In other words, we want to compute the set of coefficients such that all rows but \(n(\theta-1)+1\) to \(n(\theta-1)+n\) are set to zero, for every mode~\(\theta\).
Algorithmically, this can be realized by computing the nullspace of the matrix obtained by removing those rows from the modal matrix of \(\boperator_{\mathfrak{J}}\).
Then, from the resulting set, filter out all initial states whose first moment diverges, \ie all those coefficients where \(\abs{\lambda} > 1\) and the contribution of this eigenvalue is not cancelled out.
Finally, we remove all elements for which the limit value lies outside the desired polytope~\(\Pi\) and translate the resulting set back to the state domain.

If \(\boperator_{\mathfrak{J}}\) is not diagonalizable, we compute the same symbolic matrices~\(K_{i}\) as above, since they are independent of the initial state.
Then, compute the set of all initial states for which the multiplication with this Jordan decomposition does not contain any of the symbolic values as those represent divergence and return all those initial states that are included in polytope~\(\Pi\).

\begin{theorem}\label{thm:correctness-mc-exp}
  Let \(\mathfrak{J}\) be an \gls{mjls} with \(n\) state variables, let \(\Pi\) be some convex polytope in~\(\RR^n\).
  For every \(\xtheta \in \statespace\), it holds that
  \(
    \sat{\PCTLexp{\Pi}} = \textsc{SatE}\left(\Pi\right)
  \).
\end{theorem}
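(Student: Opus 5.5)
The plan is to prove the set equality pointwise. We fix a state \(\xtheta\in\statespace\) and show that \(\xtheta\) lies in the set returned by \(\textsc{SatE}(\Pi)\) iff the Cesàro limit \(L(x,\theta) \coloneqq \lim_{k\to\infty}\frac{1}{k+1}\sum_{i=0}^{k}\expectation{\xtheta}{x(i)}{}\) exists and lies in \(\Pi\). By \autoref{eq:exp-via-bj} and linearity of \(\summn{mn}{n}{}\), we have
\[
L(x,\theta)=\lim_{k\to\infty}\summn{mn}{n}{C_k\,(\basevect{\theta}\kron x)}, \qquad C_k=\frac{1}{k+1}\sum_{i=0}^{k}\boperator_{\mathfrak{J}}^{i}.
\]
So everything reduces to the asymptotics of \(C_k\) applied to the fixed vector \(\basevect{\theta}\kron x\). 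We then follow the three branches of Algorithm~\ref{alg:sat-long-term-exp}.

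\emph{Mean-stable branch.} If \(\spectral{\boperator_{\mathfrak{J}}}<1\), then \(\boperator_{\mathfrak{J}}^i\to 0\), so the Cesàro averages tend to \(0\) (this is Theorem~\ref{thm:ms-and-mss-spectral-radius} plus the elementary fact that Cesàro means preserve limits). Hence \(L(x,\theta)=0\) for every state, and the state is labelled iff \(0\in\Pi\), which is exactly what the algorithm returns.

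\emph{Diagonalizable branch.} Write \(\basevect{\theta}\kron x=\sum_j a_jv_j\); the coefficients are unique because the eigenvectors form a basis, so the map from states to \(U\) is a bijection onto its image. Group the terms by distinct eigenvalue:
\[
\summn{mn}{n}{C_k(\basevect{\theta}\kron x)}=\sum_{\lambda}c_k(\lambda)\,w_\lambda,
\]
where \(w_\lambda=\sum_{j\mid\lambda_j=\lambda}\summn{mn}{n}{a_jv_j}\) and \(c_k(\lambda)=\frac{1}{k+1}\sum_{i=0}^{k}\lambda^{i}\). For \(|\lambda|\le1\) one checks directly that \(c_k(\lambda)\) tends to \(1\) if \(\lambda=1\) and to \(0\) otherwise; the case \(|\lambda|=1,\lambda\ne1\) follows from the geometric sum, since \(|c_k(\lambda)|\le 2/((k+1)|1-\lambda|)\). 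The key lemma is the converse for \(|\lambda|>1\): the limit exists only if \(w_\lambda=0\) for every such \(\lambda\).

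To prove it, subtract the convergent part. Then \(\sum_{|\lambda|>1}c_k(\lambda)w_\lambda\) must converge. Since \(c_k(\lambda)=\frac{\lambda^{k+1}-1}{(k+1)(\lambda-1)}\), it suffices to show that \(\sum_{|\lambda|>1}\frac{\lambda^{k+1}}{(k+1)(\lambda-1)}w_\lambda\) cannot stay bounded unless all \(w_\lambda\) vanish. Restrict to the eigenvalues of maximal modulus \(r>1\) among those with \(w_\lambda\ne0\) and divide by \(r^{k+1}/(k+1)\). This leaves a nonzero exponential polynomial \(\sum_\lambda (\lambda/r)^{k}u_\lambda\) with distinct unimodular bases, which must tend to \(0\). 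That is impossible: its mean square over \(k\le K\) tends to \(\sum|u_\lambda|^2>0\) by orthogonality of distinct characters. This shows that \(V\) captures convergence exactly and \(W\) captures "limit in \(\Pi\)". Mapping back to the state domain is sound by uniqueness of the coefficients.

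\emph{Non-diagonalizable branch.} Use \(\boperator_{\mathfrak{J}}^i=MJ^iM^{-1}\) and compute the Cesàro limit blockwise. On a block \(J_{m_j}(\lambda_j)\), the \(d\)-th diagonal of \(J^i\) is \(\binom{i}{d}\lambda_j^{i-d}\). Its Cesàro mean tends to \(0\) if \(|\lambda_j|<1\); for \(d=0\) and \(|\lambda_j|=1\) it tends to \(1\) if \(\lambda_j=1\) and \(0\) otherwise. In the remaining cases it grows like a distinct rate \(k^{d}|\lambda_j|^{k}/k\) times a unimodular oscillation, or like \(k^{d}\) when \(|\lambda_j|=1,d\ge1\); these are the entries modelled by the symbols \(\infty_{jd}\).

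The main obstacle is to justify that, after multiplying by \(M\), \(M^{-1}\) and \(\summn{mn}{n}{}\), the limit exists iff every symbolic coefficient is zero, i.e.\ that the divergent sequences cannot cancel. The plan mirrors the diagonal case. Expand the result as \(\sum_{j,d}s_{jd}(k)\,u_{jd}\), where \(s_{jd}(k)\) are the divergent scalar sequences and \(u_{jd}\) are the vectors multiplying \(\infty_{jd}\). Order the pairs by growth rate \((|\lambda|,d)\) lexicographically, isolate the top rate, normalize, and apply the same character-orthogonality argument to the distinct unimodular phases \(\lambda/|\lambda|\) sharing that rate. This forces the leading \(u_{jd}\) to vanish; induct downward. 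One subtlety must be checked: the same \(\lambda\) may appear in several Jordan blocks with equal \(d\). Such terms have identical \(s\)-sequences, so they can genuinely cancel. The argument must therefore group them, and the algorithm's distinct symbols per block are then correct only if the condition "no symbol remains" is read as vanishing of the combined coefficient of each \((\lambda,d)\) group. I would verify, or adjust the reading of, \textsc{FillDiags} accordingly. Once the remaining terms are convergent, the surviving finite part equals the algorithm's value, and membership in \(\Pi\) concludes the proof.
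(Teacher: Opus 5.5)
Your proposal follows the paper's own route. The paper gives no separate proof of this theorem; its only justification is the derivation preceding Algorithm~\ref{alg:sat-long-term-exp}: the mean-stable shortcut, the eigenvector expansion, and the Jordan form with symbolic entries. There it simply asserts that divergent contributions ``cannot cancel each other out''. Your dominance-plus-orthogonality argument is what is needed to justify those assertions, and your treatment of the diagonalizable branch is complete.

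The subtlety you flag at the end is real, and it is a defect of the statement rather than of your argument. With one symbol per Jordan block, as written in \textsc{FillDiags}, Theorem~\ref{thm:correctness-mc-exp} is false, and the output even depends on which Jordan basis is returned. The paper's description of the Jordan form as one block per eigenvalue ``with multiplicity \(m_j\)'' hides this. Here is a counterexample:
\begin{itemize}
  \item Take \(m=5\), \(n=1\).
  \item Modes \(1,2\) are absorbing, with \(\statematrix{1}=\statematrix{2}=2\).
  \item Set \(\transprob{3}{1}=\transprob{4}{2}=1\), with \(\statematrix{3}=1\) and \(\statematrix{4}=-1\).
  \item Set \(\transprob{5}{3}=\transprob{5}{4}=\tfrac12\), with \(\statematrix{5}=1\).
\end{itemize}
Then \(\boperator_{\mathfrak{J}}\) is upper triangular with diagonal \((2,2,0,0,0)\). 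The eigenvalue \(2\) has two \(1\times 1\) blocks spanned by \(\basevect{1},\basevect{2}\). The eigenvalue \(0\) has blocks of sizes \(2\) and \(1\), so \(\spectral{\boperator_{\mathfrak{J}}}=2\) and the Jordan branch is taken. From \((x,5)\) the two paths carry \(\pm 2^{k-2}x\) with probability \(\tfrac12\) each. Hence \(\mathbb{E}_{(x,5)}[x(k)]=0\) for \(k\geq 2\), and \((x,5)\in\sat{\PCTLexp{\Pi}}\) whenever \(0\in\Pi\).

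The algorithm does not see this. The eigenvalue-\(2\) component of \(\basevect{5}x\) is \(\tfrac{x}{8}(\basevect{1}-\basevect{2})\). With the Jordan basis \(\basevect{1},\basevect{2}\), the algorithm obtains \(\tfrac{x}{8}(\infty_{j1}-\infty_{j'1})\) for the two blocks \(j,j'\), and so rejects every \(x\neq 0\). The equally valid basis \(\basevect{1}-\basevect{2},\basevect{1}\) accepts it.

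So the theorem holds only under your grouped reading: one symbol per pair \((\lambda,d)\). Equivalently, \(\textsc{sum}^{mn}_{n}\) applied to \((\boperator_{\mathfrak{J}}-\lambda\identmatrix{mn})^{d}\) times the \(\lambda\)-component of \(\basevect{\theta}\kron x\) must vanish for every divergent \((\lambda,d)\). This condition is basis-independent, and it is exactly how the paper's diagonalizable branch already groups by eigenvalue. Under that reading your downward induction goes through.

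One small correction to your growth rates concerns unimodular \(\lambda\neq 1\). For \(|\lambda|=1\), \(\lambda\neq1\), the Cesàro mean of the \(d\)-th diagonal is of order \(k^{d-1}\) with oscillating phase \(\lambda^{k}\), not of order \(k^{d}\); for \(d=1\) it is bounded but non-convergent. Only \(\lambda=1\) gives order \(k^{d}/(d+1)!\). So order the induction by the actual rate \((|\lambda|,\text{power of }k)\), not lexicographically by \((|\lambda|,d)\). The level of the \(\lambda=1\) diagonal \(d\) then also contains the unimodular diagonals \(d+1\), and orthogonality of the distinct phases separates them as before.
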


Again, we must ask whether the resulting satisfaction set is representable.
For the diagonalizable case, we first compute the nullspace of the modal matrix, which results in a convex set.
In lines~\ref{line:set-to-0} and~\ref{line:element-H}, we intersect with sets that are convex themselves.
Thus, the resulting sets are also convex.
The approach of the non-diagonalizable case boils down to solving linear equation systems and computing set membership of a convex polytope, similar to the step-bounded algorithm.
Thus, the satisfaction set remains representable as well.

\begin{figure}[t]%
  \begin{subfigure}[t]{0.5\linewidth}
    \input{res/ex-EH.pgf}%
    \caption{\(\sat{\PCTLexp{\Pi}}\)}%
    \label{fig:ex-satset-EH}%
  \end{subfigure}%
  \begin{subfigure}[t]{0.5\linewidth}
    \input{res/ex-until.pgf}%
    \caption{\(\sat{\PCTLprob{\geq 0.3}{\PCTLbuntil{\Lambda}{\PCTLexp{\Pi}}{5}}}\)}%
    \label{fig:ex-satset-until}%
  \end{subfigure}%
  \caption{A visualization of the satisfaction sets of \(\PCTLexp{\Pi}\) and \(\PCTLprob{\geq 0.3}{\PCTLbuntil{\Lambda}{\PCTLexp{\Pi}}{5}}\).}\label{fig:ex-satset}
\end{figure}

\begin{example}\label{ex:sat-E-formulas}
  The satisfaction sets of the formulas \(\PCTLexp{\Pi}\) and \(\PCTLprob{\geq 0.3}{\PCTLbuntil{\Lambda}{\PCTLexp{\Pi}}{5}}\) as defined in \autoref{ex:E-formulas} are visualized in \autoref{fig:ex-satset}.
  The colour represents the corresponding mode of the Markov chain.%
  \footnote{%
    This example has been implemented and analysed using Wolfram Mathematica with version~14.3.
    While Mathematica is able to compute exact results for~\(\PCTLexp{\Pi}\), the exact analysis of the until operator was aborted after 120 minutes.
    This is likely due to the exponential increase of possible sets per time step that need to be backpropagated (and afterwards intersected).
    We obtained the final results with a numerical analysis instead, with an (initial) \(100\)-digit precision.
  }
\end{example}

\subsection{Adaptions for Model Checking \(\PCTLbvar{k}{\Xi}\) and \(\PCTLvar{\Xi}\)}

\begin{example}\label{ex:sat-V-formulas}
  There is no initial state for which the second moment converges, except trivially the origin.
  Consequently, depending on whether \(\begin{bsmallmatrix}0 & 0 & 0\end{bsmallmatrix}\transpose \in \Xi\), \(\sat{\PCTLvar{\Xi}}\) is either \(\set{(\begin{bsmallmatrix}0 \\ 0\end{bsmallmatrix}, 1), (\begin{bsmallmatrix}0 \\ 0\end{bsmallmatrix}, 2), (\begin{bsmallmatrix}0 \\ 0\end{bsmallmatrix}, 3)}\) or \(\emptyset\).
\end{example}

Effectively, computing the satisfaction sets for the second moment operators \(\PCTLbvar{k}{\Xi}\) and \(\PCTLvar{\Xi}\) is analogous to the aforementioned algorithms, replacing the operator~\(\boperator_{\mathfrak{J}}\) with~\(\toperator_{\mathfrak{J}}\).
This requires some technical adaptations:
Since it propagates the vectorized conditional second moments, the use of \(\toperator_{\mathfrak{J}}\) lifts the whole computation to a dimension of~\(mn^2\) and up to \(mn^2\) eigenvalues are produced.
For the same reason, \(xx\transpose\) has to be vectorized before combining it with~\(\basevect{\theta}\).
Similarly, after computing \(\summn{\mathrlap{mn^2}}{n^2}{}\), the result has to be reshaped to fit the dimension of the polytope~\(\Xi\).
This can be done by first de-vectorizing with~\(\unmvec{n\times n}{}\) and then half-vectorizing the result again with~\(\mvech{}\).
The satisfaction set of \(\PCTLbvar{k}{\Xi}\) appearing in the analogon of~\autoref{thm:correctness-mc-boundedexp} is obtained as follows, with \autoref{alg:sat-long-term-exp} being adapted analogously:
\begin{multline*}
  \sat{\PCTLbvar{k}{\Xi}} = \left\{\xtheta \in \statespace \mid \right.\\
  \left.\mvech{\unmvec{n\times n}{\summn{mn^{2}}{n^{2}}{\frac{1}{k+1}\sum_{i=0}^{k}\toperator_{\mathfrak{J}}^{i}\left(\basevect{\theta} \kron \mvec{xx\transpose}\right)}}} \in \Xi\right\}
\end{multline*}

\subsection{Discussion on Realizability}

This section has laid out the mathematical foundations for the analysis of sta\-bili\-ty-related \gls{pctl} properties.
As is often the case for algorithms of this kind, however, several challenges arise when implementing them in practice.
The most immediate one is that exact arithmetics over irrational numbers is generally not directly realizable.
Even when restricting \glspl{mjls} to rational system dynamics and transition probabilities -- which both may be reasonable assumptions -- some difficulties persist:
In the case of \(\sat{\PCTLexp{\Pi}}\) and \(\sat{\PCTLvar{\Xi}}\), the eigenpairs of \(\boperator\) and~\(\toperator\) have to be computed.
Determining the eigenvalues of a \(k \times k\) matrix is equivalent to finding the roots of a polynomial of degree \(k\), for which no general closed-form solution exists when \(k > 4\) (see Abel-Ruffini theorem~\cite{dummit2003algebra}).
This frontier is exceeded easily, for instance already for \(m = 3\) and \(n = 2\) in first moment analysis and for \(m = 2\) and \(n = 2\) in second moment analysis.
As a consequence, an exact analytical computation becomes impossible for some systems, and the analysis results can only be as accurate as the underlying numerical approximation.

\section{Conclusion}

In this paper, we have addressed \gls{pctl} model checking for \gls{mjls}, enriched with propositions that allow reasoning about stability in a way that can take into consideration specific initial state sets. This enables a novel view on \glspl{mjls} that so far has been impossible to express and decide, based on characterizations of the convergence of the first and second moments, relative to a given initial state.
For the \gls{pctl} fragment excluding the until operator, we have provided a straightforward decision algorithm, while we have highlighted that the general model-checking problem for step-unbounded properties is open and closely related to other longstanding open problems.
For the logic enriched with stability operators (until aside), we have succeeded in developing a corresponding model-checking procedure that harvests linear-algebraic techniques. 
More broadly, our results show that probabilistic model checking provides a natural framework for analysing state-dependent stability properties of \glspl{mjls} beyond classical global stability notions.
Due to the setup of our logic, stability characteristics can be nested inside ordinary \gls{pctl} formulas, but the converse is not supported.
While it is possible to define an extended \gls{pctl} semantics also covering this variant, the model-checking problem for the resulting logic needs to be considered open.
This is an interesting problem for future work.

\begin{credits}
 \subsubsection{\ackname}
  This project has received funding from the European Union's Horizon 2020 research and innovation programme under the Marie Skłodowska-Curie grant agreement \href{https://cordis.europa.eu/project/id/101008233}{No 101008233} -- \href{https://mission-project.eu}{MISSION}, the European Union under the INTERREG North Sea project STORM\_SAFE of the European Regional Development Fund, and DFG grant 389792660 as part of \href{https://cpec.science}{TRR~248 -- CPEC}.
\end{credits}

\bibliographystyle{splncs04}
\bibliography{bibliography.bib}

\clearpage
\appendix
\section{Appendix}\label{sec:appendix-proofs}

\subsection{Example \gls{mjls}}\label{sec:example-mjls}

The running example in \autoref{fig:running-example} has been crafted to illustrate many of the phenomena in a tiny model.
The precise numbers used are as follows:
\begin{align*}
  A_{1} &= \begin{bmatrix}\sfrac{300}{847} & \sfrac{494649}{36784} \\ \sfrac{300}{847} & \sfrac{-474583}{62073}\end{bmatrix} \\
  A_{2} &= \begin{bmatrix}\sfrac{-119633}{6776} & \sfrac{300}{847} \\ \sfrac{102467}{7623} & \sfrac{300}{847}\end{bmatrix} \\
  A_{3} &= \begin{bmatrix}\sfrac{300}{847} & \sfrac{3416121}{1287440} \\ \sfrac{300}{847} & \sfrac{-300}{847}\end{bmatrix} \\
  P &= \begin{bmatrix}
    \sfrac{1}{7} & \sfrac{3}{7} & \sfrac{3}{7} \\
    \sfrac{1}{4} & \sfrac{1}{2} & \sfrac{1}{4} \\
    \sfrac{1}{3} & \sfrac{1}{3} & \sfrac{1}{3}
  \end{bmatrix}
\end{align*}

\subsection{Proofs}

\repeattheorem{thm:pctl-measurability}

\begin{proof}
  Recall that \(\Sigma^{\mathfrak{J}}\) contains all cylinder sets \(\cylset{\hat{\pi}}\) of all finite paths \(\hat{\pi}\) in \(\mathfrak{J}\).
  There are three possibilities for the shape of \(\varphi\): \(\PCTLnext{\Phi}\), \(\PCTLbuntil{\Phi_1}{\Phi_2}{k}\), and \(\PCTLuntil{\Phi_1}{\Phi_2}\).
  If \(\varphi = \PCTLnext{\Phi}\), then:
  \begin{align*}
    &\phantom{{}={}} \set{\pi \in \paths{\xtheta} \where \pi \vDash \PCTLnext{\Phi}}                                                                        \\
    &= \set{\big. (x, \theta)(\statematrix{\theta}x, \theta')\cdots \in \paths{\xtheta} \where (\statematrix{\theta}x, \theta') \vDash \Phi}      \\
    &= \bigcup_{\theta' = 1}^{m} \set{\cylset[\big][\big]{\xtheta (\statematrix{\theta}x, \theta')} \where (\statematrix{\theta}x,\theta') \vDash \Phi} \\
    &\in \Sigma^{\mathfrak{J}}
  \end{align*}
  If \(\varphi = \PCTLbuntil{\Phi_1}{\Phi_2}{k}\), then:
  \begin{align*}
     &\phantom{{}={}} \set{\pi \in \paths{\xtheta} \where \pi \vDash \PCTLbuntil{\Phi_1}{\Phi_2}{k}}                                                                                                                                                                                                                              \\
     &= \bigcup_{j = 0}^{k} \set{\big. (x_{0}, \theta_{0})(x_{1}, \theta_{1})\cdots \in \paths{\xtheta} \where (x_{j}, \theta_{j}) \vDash \Phi_2 \land \forall i < j.\, (x_{i}, \theta_{i}) \vDash \Phi_{1}}                                                                                                     \\
     &= \bigcup_{j = 0}^{k} \, \bigcup_{\hspace{-11pt}\mathrlap{(x_{0}, \theta_{0})\cdots(x_{j}, \theta_{j}) \in \paths[j]{\xtheta}}} \, \set{\cylset[\big][\big]{(x_{0}, \theta_{0}) \cdots (x_{j}, \theta_{j})} \where (x_{j}, \theta_{j}) \vDash \Phi_2 \land \forall i < j.\, (x_{i}, \theta_{i}) \vDash \Phi_{1}} \\
     &\in \Sigma^{\mathfrak{J}}
  \end{align*}
  The case \(\varphi = \PCTLuntil{\Phi_1}{\Phi_2}\) follows directly from the case \(\varphi = \PCTLbuntil{\Phi_1}{\Phi_2}{k}\) by replacing the upper limit~\(k\) of the union by \(\infty\). \qed
\end{proof}

\repeatlemma{lem:gmp-equiv-algo}

\begin{proof}
  Induction on \(\Phi\).
  Most cases are trivial.
  As the only cases of note, we consider \(\PCTLprob{J}{\PCTLnext{\Phi}}\), \(\PCTLprob{J}{\PCTLbuntil{\Phi_{1}}{\Phi_{2}}{k}}\), and \(\PCTLprob{J}{\PCTLuntil{\Phi_{1}}{\Phi_{2}}}\).

  Let \(\Phi = \PCTLprob{J}{\PCTLnext{\Phi}}\).
  The claim \(\xtheta \vDashGMP \PCTLprob{J}{\PCTLnext{\Phi}}\) iff~\({\xtheta \vDash \PCTLprob{J}{\PCTLnext{\Phi}}}\) is equivalent to \(\Q{\xtheta}{\sat{\Phi}} \in J\) iff \(\prob[\big][\big]{\set{\big. \pi \in \paths{\xtheta} \where \pi \vDash \PCTLnext{\Phi}}} \in J\) by definition.
  We prove this equivalence by showing equality of the two probabilities:
  \begin{align*}
    \Q{\xtheta}{\sat{\Phi}}
     & = \sum_{\theta' = 1}^{m} \indicator{\sat{\Phi}}{(\statematrix{\theta}x, \theta')} \transprob{\theta}{\theta'}      & \eqref{eq:gmp-lifting}                                       \\
     & = \sum_{\theta'=1}^{m} \indicator{\sat{\Phi}}{(\statematrix{\theta}x, \theta')} \prob{\cylset{\xtheta(\statematrix{\theta}x, \theta')}}       \\
     & = \prob{\bigsqcup_{\hspace{-11pt}\mathrlap{(\statematrix{\theta} x, \theta') \in \sat{\Phi}}} \cylset{\xtheta(\statematrix{\theta}x, \theta')}}       \\
     & = \prob{\bigsqcup_{\hspace{-11pt}\mathrlap{(\statematrix{\theta} x, \theta') \vDash \Phi}} \cylset{\xtheta(\statematrix{\theta}x, \theta')}}       \\
     & = \prob{\set{\pi \in \paths{\xtheta} \where \pi[1] \vDash \Phi}} \\
     & = \prob{\set{\pi \in \paths{\xtheta} \where \pi \vDash \PCTLnext{\Phi}}}
  \end{align*}

  For \(\Phi = \PCTLprob{J}{\PCTLbuntil{\Phi_{1}}{\Phi_{2}}{k}}\), we show
  \[
    V_{k}\left(\xtheta\right) = \prob{\set{\pi \in \paths{\xtheta} \where \pi \vDash \PCTLbuntil{\Phi_{1}}{\Phi_{2}}{k}}}
  \] by induction on \(k\).
  Let \(\delta_{y}{A} = \indicator{A}{y}\) denote the Dirac measure on a set~\(Y\) for a given \(y\in Y\) and any measurable set~\(A \subseteq Y\).
  For the Dirac measure, the Lebesgue integral \(\int_{Y} f(\xi) \delta_{y}(d\xi)\) is equal to \(f(y)\).
  The base case as well as cases \(\xtheta \in \sat{\Phi_{2}}\) and \(\xtheta \notin \sat{\Phi_{1}} \land \xtheta \notin \sat{\Phi_{2}}\) are trivial.
  Suppose that \(\xtheta \in \sat{\Phi_{1}} \setminus \sat{\Phi_{2}}\), where we note that
  \begingroup
  \allowdisplaybreaks
  \begin{align*}
    &\hphantom{{}={}}V_{k+1}\left(\xtheta\right)\\
     & =\int_{\statespace} V_{k}\left(\xi\right) \cdot \Q{\xtheta}{d\xi}                                                                                                                                   \\
     & = \int_{\statespace} V_{k}\left(\xi\right) \cdot \sum_{\theta'=1}^{m} \indicator{d\xi}{(\statematrix{\theta}x, \theta')} \cdot \transprob{\theta}{\theta'}                                          \\
     & = \sum_{\theta'=1}^{m} \transprob{\theta}{\theta'} \cdot \int_{\statespace} V_{k}\left(\xi\right) \cdot \indicator{d\xi}{(\statematrix{\theta}x, \theta')}                                           \\
     & = \sum_{\theta'=1}^{m} \transprob{\theta}{\theta'} \cdot \int_{\statespace} V_{k}\left(\xi\right) \cdot \delta_{(\statematrix{\theta}x, \theta')}(d\xi) \\ 
     & = \sum_{\theta'=1}^{m} \transprob{\theta}{\theta'} V_{k}\left((\statematrix{\theta}x,\theta')\right)                                                         \\
     & \stackrel{\text{IH}}{=} \sum_{\theta'=1}^{m} \transprob{\theta}{\theta'} \cdot \prob{\set{\pi \in \paths{(\statematrix{\theta}x,\theta')} \where (\statematrix{\theta}x,\theta') \vDash \PCTLbuntil{\Phi_{1}}{\Phi_{2}}{k}}} \\
     & = \prob{\set{\pi \in \paths{\xtheta} \where \pi \vDash \PCTLbuntil{\Phi_{1}}{\Phi_{2}}{k+1}}}
  \end{align*}
  \endgroup
  For \(\Phi = \PCTLprob{J}{\PCTLuntil{\Phi_{1}}{\Phi_{2}}}\) it is easy to see that
  \begin{align*}
    \lim_{k\to\infty} V_{k}\left(\xtheta\right) 
    &= \lim_{k\to\infty} \prob{\set{\pi \in \paths{\xtheta} \where \pi \vDash \PCTLbuntil{\Phi_{1}}{\Phi_{2}}{k}}} \\
    &= \prob{\bigcup_{k=0}^{\infty} \set{\pi \in \paths{\xtheta} \where \pi \vDash \PCTLbuntil{\Phi_{1}}{\Phi_{2}}{k}}} \\
    &= \prob{\set{\pi \in \paths{\xtheta} \where \pi \vDash \PCTLuntil{\Phi_{1}}{\Phi_{2}}}}\tag*{\qed}
  \end{align*}
\end{proof}

\repeattheorem{thm:correctness-mc-boundedpctl}

\begin{proof}
  We show that \(\set{\xtheta \in \statespace \where \xtheta \vDash \Phi}\) (\(= \sat{\Phi}\)) is equal to the sets given above by induction on \(\Phi\).
  The base cases \(\ttrue\) and \(a\), as well as the cases \(\Phi_{1} \land \Phi_{2}\), and \(\neg\Phi'\) are trivial and will be omitted here.

  For \(\Phi = \PCTLprob{J}{\PCTLnext{\Phi'}}\), we have to show that
  \begin{multline*}
    \prob{\set{\pi \in \paths{\xtheta} \where \pi \vDash \PCTLnext{\Phi'}}} \in J \\ \iff \sum_{\theta'=1}^{m}\indicator{\sat{\Phi'}}{(\statematrix{\theta} x, \theta')} \transprob{\theta}{\theta'} \in J
  \end{multline*}
  For this, note that
    \begin{align*}
      &\phantom{{}={}} \prob{\set{\pi \in \paths{\xtheta} \where \pi \vDash \PCTLnext{\Phi'}}}\\
      & = \prob{\set{\pi \in \paths{\xtheta} \where \pi[1] \vDash \Phi'}} \\
      & = \prob{\bigsqcup_{\hspace{-11pt}\mathrlap{(\statematrix{\theta} x, \theta') \vDash \Phi'}} \cylset{\xtheta(\statematrix{\theta}x, \theta')}}       \\
      & = \prob{\bigsqcup_{\hspace{-11pt}\mathrlap{(\statematrix{\theta} x, \theta') \in \sat{\Phi'}}} \cylset{\xtheta(\statematrix{\theta}x, \theta')}}       \\
      & = \sum_{\theta'=1}^{m} \indicator{\sat{\Phi'}}{(\statematrix{\theta}x, \theta')} \prob{\cylset{\xtheta(\statematrix{\theta}x, \theta')}}       \\
      & = \sum_{\theta' = 1}^{m} \indicator{\sat{\Phi'}}{(\statematrix{\theta}x, \theta')} \transprob{\theta}{\theta'}  
  \end{align*}

  For \(\Phi = \PCTLprob{J}{\PCTLbuntil{\Phi_{1}}{\Phi_{2}}{k}}\), we have to show that
  \begin{multline*}
    \forall k \in \NN.\, \forall \xtheta.\, \prob{\set{\pi \in \paths{\xtheta} \where \pi \vDash \PCTLbuntil{\Phi_{1}}{\Phi_{2}}{k}}} \in J \\
    \iff U_{k}\left(\xtheta\right) \in J
  \end{multline*}
  by induction on \(k \in \NN\).
  The base case is trivial.
  For \(k > 0\), we consider the three cases
  \begin{enumerate*}[(i)]
    \item\label{it:sat2} \(\xtheta \vDash \Phi_2\),
    \item\label{it:notsat1notsat2} \(\xtheta \nvDash \Phi_1 \land \xtheta \nvDash \Phi_2\), and
    \item\label{it:sat1notsat2} \(\xtheta \vDash \Phi_1 \land \xtheta \nvDash \Phi_2\).
  \end{enumerate*}

  For cases \ref{it:sat2} and \ref{it:notsat1notsat2}, \(\prob{\set{\pi \in \paths{\xtheta} \where \pi \vDash \PCTLbuntil{\Phi_{1}}{\Phi_{2}}{k}}}\) reduces to \(\prob{\paths{\xtheta}} = 1\) and \(\prob{\emptyset} = 0\), respectively.
  These results coincide with \(U_{k}\left(\xtheta\right)\).

  For \ref{it:sat1notsat2}, note that
  \begingroup
  \allowdisplaybreaks
  \begin{align*}
    &\phantom{{}={}} \prob{\set{\pi \in \paths{\xtheta} \where \pi \vDash \PCTLbuntil{\Phi_{1}}{\Phi_{2}}{k}}}\\
    &= \prob{\set{\pi \in \paths{\xtheta} \where \exists j \leq k.\, \pi[j] \vDash \Phi_{2} \land \forall 0 \leq i < j.\, \pi[i] \vDash \Phi_{1}}}\\
    &= \prob{\bigcup_{j=0}^{k}\set{\pi \in \paths{\xtheta} \where \pi[j] \vDash \Phi_{2} \land \forall 0 \leq i < j.\, \pi[i] \vDash \Phi_{1}}}\\
    &\stackrel{\mathclap{\ref{it:sat1notsat2}}}{=} \prob{\bigcup_{j=1}^{k}\set{\pi \in \paths{\xtheta} \where \pi[j] \vDash \Phi_{2} \land \forall 0 \leq i < j.\, \pi[i] \vDash \Phi_{1}}}\\
    &= \sum_{\theta'=1}^{m} \transprob{\theta}{\theta'} \cdot \mi{Pr}\left(\bigcup_{j=0}^{k-1}\right.
    \begin{multlined}[t]
      \{\pi\in\paths{(A_{\theta}x,\theta')} \;\vert\; {}\\ \left.\vphantom{\bigcup_{j=0}^{k-1}}\pi[j] \vDash \Phi_{2} \land \forall 0 \leq i < j.\, \pi[i] \vDash \Phi_{1}\}\right)
    \end{multlined} \\
    &= \sum_{\theta'=1}^{m} \transprob{\theta}{\theta'} \cdot \mi{Pr}(
      \begin{multlined}[t]
        \{\pi\in\mi{Paths}((A_{\theta}x,\theta')) \;\vert\; \\ \exists j \leq k-1.\,\pi[j] \vDash \Phi_{2} \land \forall 0 \leq i < j.\, \pi[i] \vDash \Phi_{1}\})
      \end{multlined} \\
    &= \sum_{\theta' = 1}^{m} \transprob{\theta}{\theta'} \cdot \prob{\set{\pi \in \paths{(A_{\theta}x,\theta')} \where \pi \vDash \PCTLbuntil{\Phi_{1}}{\Phi_{2}}{k-1}}} \\
    &\stackrel{\text{IH}}{=} \sum_{\theta' = 1}^{m} \transprob{\theta}{\theta'} \cdot U_{k-1}\left((\statematrix{\theta}x,\theta')\right) \\
    &= U_{0}\left(\xtheta\right)
    + \indicator{\sat{\Phi_{1}} \setminus \sat{\Phi_{2}}}{\xtheta} \sum_{\theta' = 1}^{m} \transprob{\theta}{\theta'} \cdot U_{k-1}\left((\statematrix{\theta}x,\theta')\right) \\
    &= U_{k}\left(\xtheta\right) \tag*{\qed}
  \end{align*}
  \endgroup
\end{proof}

\end{document}